\newcommand{\be}{\begin{equation}}
\newcommand{\ee}{\end{equation}}
\newcommand{\bea}{\begin{eqnarray}}
\newcommand{\eea}{\end{eqnarray}}
\newcommand{\nn}{\nonumber}
\renewcommand{\vec}[1]{\boldsymbol{#1}}
\newcommand{\BB}{\mathcal{B}}
\newtheorem{thm}{Theorem}
\newtheorem{defn}[thm]{Definition}
\newtheorem{prop}{Proposition}
\newtheorem{lemma}{Lemma}
\begin{document}
\bibliographystyle{apsrev}

\title{Non-Local Impact of Link Failures in Linear Flow Networks} %

\author{Julius Strake}
\thanks{JS and FK contributed equally to this work.}
\affiliation{Forschungszentrum J\"ulich, Institute for Energy and Climate Research -
	Systems Analysis and Technology Evaluation (IEK-STE),  52428 J\"ulich, Germany}
\affiliation{University of Cologne, Institute for Theoretical Physics, 
		50937 K\"oln, Germany}
		
\author{Franz Kaiser}
\thanks{JS and FK contributed equally to this work.}
\affiliation{Forschungszentrum J\"ulich, Institute for Energy and Climate Research -
	Systems Analysis and Technology Evaluation (IEK-STE),  52428 J\"ulich, Germany}
\affiliation{University of Cologne, Institute for Theoretical Physics, 
		50937 K\"oln, Germany}

		\author{Farnaz Basiri}
\affiliation{Forschungszentrum J\"ulich, Institute for Energy and Climate Research -
	Systems Analysis and Technology Evaluation (IEK-STE),  52428 J\"ulich, Germany}

\author{Henrik Ronellenfitsch}
\affiliation{Department of Mathematics, Massachusetts Institute of Technology, Cambridge, MA 02139, U.S.A.}

\author{Dirk Witthaut}
\affiliation{Forschungszentrum J\"ulich, Institute for Energy and Climate Research -
	Systems Analysis and Technology Evaluation (IEK-STE),  52428 J\"ulich, Germany}
\affiliation{University of Cologne, Institute for Theoretical Physics, 
		50937 K\"oln, Germany}

\date{\today }

\begin{abstract}
The failure of a single link can degrade the operation of a supply network up to the point of complete collapse. Yet, the interplay between network topology and locality of the response to such damage is poorly understood. Here, we study how topology affects the redistribution of flow after the failure of a single link in linear flow networks with a special focus on power grids. In particular, we analyze the decay of flow changes with distance after a link failure and map it to the field of an electrical dipole for lattice-like networks. The corresponding inverse-square law is shown to hold for all regular tilings. For sparse networks, a long-range response is found instead. In the case of more realistic topologies, we introduce a rerouting distance, which captures the decay of flow changes better than the traditional geodesic distance. Finally, we are able to derive rigorous bounds on the strength of the decay for arbitrary topologies that we verify through extensive numerical simulations. Our results show that it is possible to forecast flow rerouting after link failures to a large extent based on purely topological measures and that these effects generally decay with distance from the failing link. They might be used to predict links prone to failure in supply networks such as power grids and thus help to construct grids providing a more robust and reliable power supply. 
\end{abstract}

\maketitle

\section{Introduction}
\label{sec:intro}
The robust operation of supply networks is essential for the function of complex systems across scales and disciplines. Almost all of our technical and economical infrastructure depends on the reliable operation of the electric power grid~\cite{Marr08,Vleu10}. Biological organisms distribute water and nutrients via their vascular networks, for instance in plant leaves~\cite{Sack2013}, the human and animal circulatory system~\cite{Pittmann2011}, or in protoplasmic veins of certain slime molds~\cite{Tero2010}. Huge amounts of money and assets are exchanged through a complex financial network~\cite{Schw09}. Structural damages to such networks can have catastrophic consequences such as a stroke, a power outage or a major economic crisis.

In power grids, large scale outages are typically triggered by the failure of a single transmission or generation element~\cite{Pour06,Yang2017,Rohden2016,16redundancy,Schafer2018}. The outages in the United States in 2003, Italy in 2003 and Western Europe in 2006 are very well documented and provide a detailed insight into the dynamics of a large scale network failure~\cite{USout03,Ande05,UCTE07}. Each outage was triggered by the loss of a transmission line during a period of high grid load. Subsequently, the power flows were rerouted, causing secondary overloads and eventually a cascade of failures. In these three examples, the cascades propagated mostly locally -- overloads took place in the proximity of previous failures. However, this is not necessarily the case during power outages (see, e.g.~\cite{USout96}), raising the question of how networks flows are rerouted after failures~\cite{Dobs16,Hine17, Labavic2014,Jung2015,Kett15,15nonlocal,Witthaut13,tyloo2018}.

In biological distribution networks, robustness against link failure is a critical prerequisite that guards against potentially life-threatening events such as stroke~\cite{Schaffer2006} or embolism~\cite{Roth-Nebelsick2001,Sack2008}, but also to function efficiently in the presence of fluctuations~\cite{Nesti2018,Sack2013,Kati10}. Thus, biological networks are often (but not in all cases, such as in the penetrating arterioles of the cortical vasculature~\cite{Shih2013}) endowed with highly resilient, redundant topologies that optimize rerouting of flow in case of link failure to the network~\cite{Kati10} and are generated through adaptive developmental mechanisms~\cite{Ronellenfitsch2017b}. For the understanding of such life-threatening conditions it is therefore crucial to investigate the behavior of the vascular network in the case of failure.

To understand the vulnerability of networks, we here provide a detailed analysis of the impact of link failures in linear flow networks. We focus on how the network topology determines the overall network response as well as the spatial flow rerouting. We consider linear supply network models, where the flow between two adjacent nodes is proportional to the difference of the nodal potential, pressure or voltage phase angle. Linear models are applied to hydraulic networks~\cite{Hwan96}, vascular networks of plants and animals~\cite{Kati10,Hu2013,Corson2010,Ronellenfitsch2016,Gavrilchenko18}, economic input-output networks \cite{Miller2009} as well as electric power grids~\cite{Grai94,Wood14,Purc05,Hert06,Horsch18,Manik17}. The linearity allows to obtain several rigorous bounds for flow rerouting in general network topologies and to solve special cases fully analytically.

\section{Linear flow networks}

Consider a network consisting of $N$ nodes that are connected to each other via lines denoted by $(m,n)$ for a line going from node $m$ to node $n$. Assign a potential or phase angle $\theta_m\in\mathbb{R}$ to each node $m$ in the network.
Then we assume the flow $F_{m \rightarrow n}$ between nodes $m$ and $n$ connected via line $(m,n)$ to be \emph{linear} in the potential drop along the line
\be
    F_{m \rightarrow n} = b_{mn} (\theta_m - \theta_n).
    \label{eq:flow-line}
\ee
Here, $b_{mn}=b_{nm}$ is the transmission capacity assigned to the line $(m,n)$ that describes its ability to carry flow. This equation may for example be used to describe hydraulic networks~\cite{Hwan96,diaz2016} or vascular networks of plants~\cite{Kati10}, where the $\theta_n$ denotes the pressure at some node $n$ and the capacity $b_{mn}$ scales with the diameter of a pipe or vein.
Our main focus will be its application to electric power engineering, where this linear approximation of the power flow equations is referred to as the DC approximation~\cite{Wood14,Purc05,Hert06}. In this case, $F_{m \rightarrow n}$ refers to the flow of real power along a transmission line $(m,n)$, $\theta_n$ is the voltage phase angle at node $n$ and $b_{mn}$ is proportional to the line's susceptance. 

In addition to that, we assume that Kirchhoff's current law holds at the nodes of the network which states that the inflows and outflows at any node $m$ balance%
\be
    \sum_{n=1}^N F_{m \rightarrow n} = P_m,
    \label{eq:continuity}
\ee
where the right-hand-side denotes the inflow ($P_m>0$) or outflow ($P_m < 0$) at node $m$, commonly called the `power injection' in power engineering. Equations (\ref{eq:flow-line}) and (\ref{eq:continuity}) fully describe the state and the flow of the network once the line parameters $b_{mn}$ and the injections $P_m$ are given.

These equations may be conveniently written using a vectorial notation. Define the vector $\vec \theta = (\theta_1,\ldots,\theta_N)^\top \in \mathbb{R}^N$ of the nodal potentials or voltage phase angles and the vector $\vec P = (P_1,\ldots,P_N)^\top \in \mathbb{R}^N$ of nodal injections. Here and in the following sections, the superscript `$\top$' denotes the transpose of a vector or matrix. We further label all lines in the grid by $\ell = 1,\ldots,L$ and summarize all line flows in a vector $\vec F = (F_1,\ldots,F_L)^\top \in \mathbb{R}^L$.
Equation~(\ref{eq:flow-line}) may then be rewritten as
\be
    \vec F = \vec B_d \vec K^\top \vec \theta,\nonumber
\ee
where $\vec B_d \in \mathbb{R}^{L \times L}$ is a diagonal matrix containing the capacities $b_\ell$ of all edges. Furthermore, we defined the node-edge incidence matrix $\vec K \in \mathbb{R}^{N \times L}$. To define this matrix in an undirected graph, one typically fixes an arbitrary orientation of the graph's edges such that its components read%
\be
   K_{n,\ell} = \left\{
   \begin{array}{r l}
      1 & \; \mbox{if line $\ell$ starts at node $n$},  \\
      - 1 & \; \mbox{if line $\ell$ ends at node $n$},  \\
      0     & \; \mbox{otherwise}.
  \end{array} \right.
  \nonumber%
\ee

The node-edge incidence matrix also relates the injections to the flows incident at a node. More specifically, Kirchhoff's current law~(\ref{eq:continuity}) may be rewritten as follows%
\begin{equation}
  \vec P = \vec K \vec F = \vec K \vec B_d \vec K^\top \vec \theta = \vec B \vec \theta.
  \label{eq:DCapprox}
\end{equation}
Here, we defined the matrix $\vec B = \vec K \vec B_d \vec K^\top \in \mathbb{R}^{N \times N}$ commonly referred to as the nodal susceptance matrix in power engineering. Mathematically, $\vec B$ is a weighted Laplacian matrix~\cite{Newm10,merris1994} with components
\begin{equation}
  b_{mn} = \left\{
   \begin{array}{lll}
   \displaystyle\sum \nolimits_{\ell \in \Lambda_m} b_{\ell} &  \mbox{if } m = n; \\ [2mm]
     - b_{\ell} & \mbox{if }  m \mbox{ is connected to } n \mbox{ by } \ell.
   \end{array} \right. \nonumber%
\end{equation}
Here, $\Lambda_m$ is the set of lines which are incident to $m$.

\section{Algebraic description and analysis of line outages}

An important question in network security analysis is how the flows in the network change if a line fails. Denoting by $F_{\ell}$ the initial flow of the failing line $\ell \hat = (r,s)$, the flow change $\Delta F_e$ at a transmission line $e \hat = (m,n)$ is written as
\be
    \Delta F_e = \mbox{LODF}_{e,\ell} \, F_{\ell} .\nonumber
\ee
Adopting the language of power system security analysis~\cite{Grai94,Wood14},
we call the factor of proportionality the line outage distribution factor (LODF). In the following, we present two alternative derivations as well as a physical interpretation of the linear flow rerouting problem.

\subsection{Self-consistent derivation of line outage distribution factors}

To derive an explicit expression for the LODFs one generally starts with a related problem. Consider an increase of the real power injection at node $r$ and a corresponding decrease at node $s$ by the amount $\Delta P$.
The new vector of real power injections is then given by
\be
  \hat{\vec P} = \vec P + \Delta P\, \vec \nu_{rs},
  \label{eqn:real-power-balance}
\ee
where the components of $\vec \nu_{rs} \in \mathbb{R}^N$  are $+1$ at position $r$, $-1$ at position $s$ and zero otherwise. Here and in the following, we use a hat to indicate the state of the network \emph{after} a line outage or a similar change of the topology. The change of the real power injections causes a change of the nodal voltage angles,
\be
  \vec \psi = \hat{\vec \theta} - \vec \theta =  \Delta P \, \vec B^{\dagger} \vec \nu_{rs},
  \nonumber%
\ee
where $\vec B^{\dagger}$ denotes the Moore-Penrose pseudo-inverse of the Laplacian matrix $\vec B$. Hence, the real power flows change by
\begin{align}
    \Delta F_{mn} &= b_{mn} (\psi_m - \psi_n) =
      \underbrace{b_{mn} \vec \nu_{mn}^\top \vec B^{\dagger} \vec \nu_{rs}}_{
          =: {\rm PTDF}_{(m,n),r,s}} \Delta P.
    \label{eq:ptdf-definition}
\end{align}
The factor of proportionality is referred to as the power transfer distribution factor (PTDF).

The LODFs can be expressed by PTDFs in the following way \cite{Wood14}. To consistently model the outage of line $(r,s)$, one assumes that the line is disconnected from the grid by circuit breakers and that some fictitious real power $\Delta P$ is injected at node $s$ and taken out at node $r$. The entire flow over the line $(r,s)$ after the opening thus equals the fictitious injections $\hat F_{rs} = \Delta P$. Using PTDFs, we also know that
\begin{align}
   \hat F_{rs} = F_{rs}  + \mbox{PTDF}_{(r,s),r,s} \, \Delta P \, .\nonumber
\end{align}
Substituting $\hat F_{rs} = \Delta P$ and solving for $\Delta P$ yields
\begin{align}
 \Delta P = \hat F_{rs} = \frac{F_{rs}}{1 - \mbox{PTDF}_{(r,s),r,s}} \, .\nonumber
\end{align}
The change of real power flows of all other lines is given by
Equation (\ref{eq:ptdf-definition}) such that we
finally obtain
\be
    \mbox{LODF}_{(mn),(rs)} =\frac{\mbox{PTDF}_{(m,n),r,s}}{1 - \mbox{PTDF}_{(r,s),r,s}} \, .
    \label{eqn:lodf1}
\ee
For consistency, one usually defines the $\mbox{LODF}$ for the failing line as follows $\mbox{LODF}_{(rs),(rs)} = -1$. In addition to that, we exclude cases where the failing line is a bridge, i.e., a line whose removal disconnects the graph, from our analysis in the following sections.

\subsection{Algebraic derivation of line outage distribution factors}

The LODFs can also be obtained in a purely algebraic way without any self-consistency argument~\cite{Guo09}. As the line $\ell \hat = (s,r)$ fails, the nodal susceptance matrix of the network changes as
\begin{align}
   \vec B \rightarrow  \hat{\vec B} &= \vec B + \Delta \vec B  \nonumber \\
   \Delta \vec B &= B_{rs} \vec \nu_{rs} \vec \nu_{rs}^\top,
   \label{eq:def-Vsingle}
\end{align}
which causes a change of the nodal potentials or voltage phase angles respectively,
\be
    \vec \theta \rightarrow  \hat{\vec \theta} = \vec \theta + \vec \psi.\nonumber
\ee
Equation (\ref{eq:DCapprox}) for the perturbed grid now reads
\be
   (\vec B + \Delta \vec B) (\vec \theta + \vec \psi) = \vec P.
   \nonumber%
\ee
Subtracting Equation (\ref{eq:DCapprox}) for the unperturbed grid, we see that the change of the voltage angles is given by
\begin{align}
    \vec \psi &= - (\vec B + \Delta \vec B)^{\dagger}  \Delta \vec B \, \vec \theta \nn \\
           &=   (\vec B + \Delta \vec B)^{\dagger} \,  \vec \nu_{rs}  F_{rs} .
   \label{eqn:DeltaTheta}
\end{align}
The change of flows after the outage of line $(r,s)$ and thus the LODFs are calculated from the change of the voltage angles which yields
\begin{align}
  \Delta F_{mn} &= b_{mn} (\psi_{m} - \psi_{n}) \nn  \\
                    &=  b_{mn} \vec \nu_{mn}^\top \vec \psi \nn \\
                    &=  b_{mn} \vec \nu_{mn}^\top (\vec B + \Delta \vec B)^{\dagger} \vec \nu_{rs} F_{rs}.
                    \label{eqn:DeltaF}
\end{align}
In principle, we could now use these equations to calculate the flow changes and the LODFs. However, this would require to invert the matrix $\hat{\vec B}=\vec B + \Delta \vec B$ separately for every possible line $(s,r)$ in the grid, which is impractical. Nevertheless, we can simplify the expression using the Woodbury matrix identity,
\bea
    && (\vec B + B_{rs} \vec \nu_{rs} \vec \nu_{rs}^\top)^{\dagger} = \nn \\
   && \qquad \vec B^{\dagger} -
      \vec B^{\dagger} \vec  \nu_{rs}  (B_{rs}^{\dagger} +\vec \nu_{rs}^\top   \vec \nu_{rs}  )^{\dagger}
      \vec \nu_{rs}^\top \vec B^{\dagger}.\nonumber
\eea
Thus we obtain
\be
     (\vec B + B_{rs} \vec \nu_{rs} \vec \nu_{rs}^\top)^{\dagger} \vec \nu_{rs} =
   (1 + B_{rs} \vec \nu_{rs}^\top \vec B^\dagger  \vec \nu_{rs})^{-1} \,  \vec B^\dagger  \vec \nu_{rs},
   \label{eq:Woodbury2}
\ee
such that the flow change (\ref{eqn:DeltaF}) reads
\be
    \Delta F_{mn} =
     \frac{b_{mn} \vec \nu_{mn}^\top \vec B^\dagger  \vec \nu_{rs}}{
       1 - b_{rs} \vec \nu_{rs}^\top \vec B^\dagger  \vec \nu_{rs}} \, \times F_{rs} \, \nonumber,
\ee
which is identical to Equation~(\ref{eqn:lodf1}) obtained using the standard approach.

\subsection{Electrostatic interpretation}

A deeper physical insight into the network flow rerouting problem is obtained by the
analogy to discrete electrostatics. Equation (\ref{eqn:DeltaTheta}) can be rearranged
into  a linear set of equations for the change of the nodal potentials 
\be
   \hat { \vec B}  \vec \psi = F_{rs} \vec \nu_{sr}  \,  .
   \label{eq:Poisson-1}
\ee
Here, $\hat {\vec B}$ is the Laplacian of the grid \emph{after} the failure, i.e.~the grid where line $(r,s)$ has been removed. Alternatively, we can formulate the equation in terms of the original network topology, substituting Equation (\ref{eq:Woodbury2}) into Equation (\ref{eqn:DeltaTheta}). This yields the linear set of equations
\be
    \vec B \, \vec \psi =  \vec q
     \label{eq:Poisson}
\ee
with the dipole source
\be
    \vec q = 
     (1 - b_{rs} \vec \nu_{rs}^\top \vec B^\dagger  \vec \nu_{rs})^{-1} F_{rs} \vec \nu_{sr} \, .
\ee
As noted before, $\vec B$ and $\hat{\vec B}$ are Laplacian matrices and the right-hand side of both equations (\ref{eq:Poisson}) and (\ref{eq:Poisson-1}) are non-zero only at positions $r$ and $s$ with opposite sign. Hence, these equations are \emph{discrete Poisson equations} with a dipole source and $\vec \psi$ is a dipole potential, see Ref.~\cite[ch. 15]{Norman97} for a detailed analysis of this equation. The main complexity of the line outage problem thus arises from the network topology encoded in the Laplacian $\vec B$, which can be highly irregular.

The two equations (\ref{eq:Poisson}) and (\ref{eq:Poisson-1}) yield the same potential $\vec \psi$, but are formulated on different topologies -- either on the original network topology or the topology after the outage. To compare the impact of different failures it is beneficial to use the original topology, such that only the dipole inhomogeneity differs -- not the electrostatic problem itself.
Then, the strength of the dipole depends on the network topology via the prefactor $(1 - b_{rs} \vec \nu_{rs}^\top \vec B^\dagger  \vec \nu_{rs})^{-1}$.

Using the analogy to electrostatics we can solve the flow rerouting problem for regular network topologies (section \ref{sec:square}) and provide some general rigorous results (section \ref{sec:rigorous}). To understand flow rerouting in networks with complex topologies, we thus have to account for the spatial spreading pattern described by $\vec B^\dagger$ (see section \ref{sec:decay-num}) as well as the dipole strength, which quantifies the gross response of the grid (see section \ref{sec:dipole}).

\section{Failures in regular networks and the continuum limit}
\label{sec:square}

\begin{figure}[tb]
\begin{center}
\includegraphics[width=1.\columnwidth]{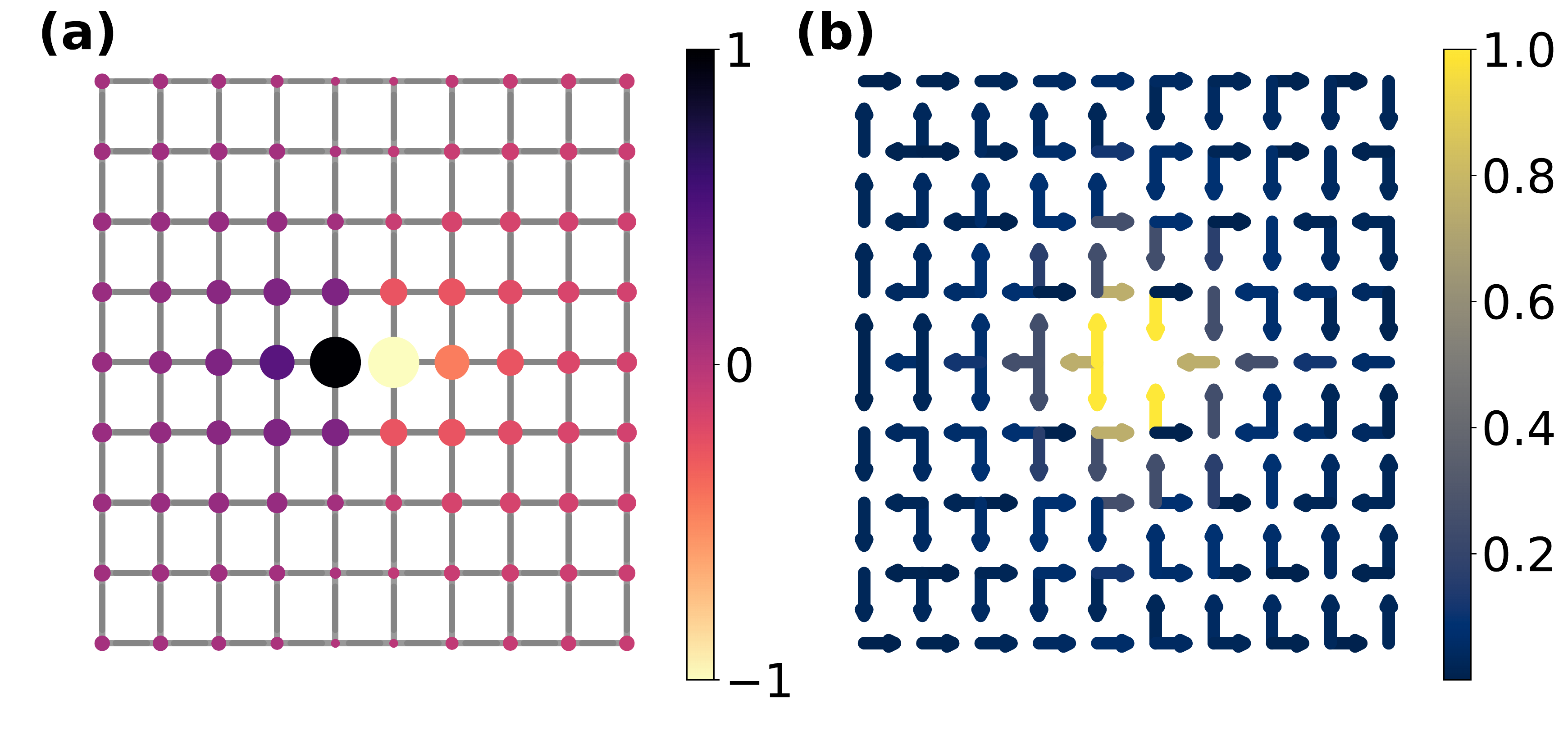}
\end{center}
\caption{
\label{fig:lattice}
Impact of a link failure in a homogeneous square lattice. (a) Normalized change of the nodal potentials $\psi_n$ for a network with uniform edge weights for a single failing link located in the center of the network. The size of the nodes as well as the colorcode represent the strength of the change in potential. The change is strongest close to the failing link and decays with distance. %
(b) Normalized change of the link flows $\Delta F_{nm}$ for the same topology. Arrows and color represent direction and strength of flow changes, respectively. The pattern corresponds to the one produced by an electrostatic dipole.}
\end{figure}

To obtain a first insight into the spatial aspects of flow rerouting, we consider an elementary example admitting a solution in the continuum limit. Consider a regular square lattice embedded in a plane as depicted in Figure~\ref{fig:lattice} and studied in a slightly different form in Ref.~\cite{Jozsef2002}. All nodes are labeled by their positions $\vec r = (x,y)^\top$ in this two-dimensional embedding and the lattice spacing is denoted as $h$. We introduce continuous functions $\psi$ and $b$ such that $\psi(x,y)$ is the potential of the node at $(x,y)$ and $b(x+h/2,y)$ is the weight of the link connecting the two nodes at $(x,y)$ and $(x+h,y)$. The left-hand side of the Poisson equation (\ref{eq:Poisson}) evaluated at position $(x,y)$ reads
\begin{align}
    & (\vec B \psi)(x,y)  \nn\\
     &\; = b(x+h/2,y) \left[ \psi(x,y) - \psi(x+h,y) \right] \nn \\
     &\quad + b(x-h/2,y) \left[ \psi(x,y) - \psi(x-h,y) \right] \nn \\
     &\quad + b(x,y+h/2) \left[ \psi(x,y) - \psi(x,y+h) \right] \nn \\
     &\quad + b(x,y-h/2) \left[ \psi(x,y) - \psi(x,y-h) \right] \nn \\
    &\; = -h^2 \nabla \cdot \left(b(x,y) \nabla \psi \right) + \mathcal{O}(h^3).
    \label{eq:PLHS-cont}
\end{align}
Here, we made use of the fact that the components of the gradient $\nabla\psi=(\partial_x \psi,\partial_y \psi)^\top$ may be expressed as 
\begin{align*}
    \frac{\partial \psi(x,y)}{\partial x}=\lim\limits_{h\rightarrow 0}\frac{\psi(x+h,y)-\psi(x,y)}{h},
\end{align*}
but did not take the limit yet.
The derivative with respect to $y$ may be calculated analogously.

Before we proceed to the right-hand side, we remark that the flow changes $\Delta \vec{F}$ according to equation (\ref{eqn:DeltaF}) are given by
given by
\begin{align}
   \Delta \vec{F}_x(x+h/2,y) &= b(x+h/2,y) ( \psi(x+h,y) - \psi(x,y)  )  \nn \\
   \Delta \vec{F}_y(x,y+h/2) &= b(x,y+h/2) ( \psi(x,y+h) - \psi(x,y)  ) \nn
\end{align}
such that we obtain in the continuum limit $h\rightarrow 0$,
\be
   \Delta \vec F(x,y) =b(x,y) \nabla \psi(x,y).
   \label{eq:flow-cont}
\ee
Note that the expression $\Delta \vec{F}$ refers to the change in flow due to the link failure here and should not be confused with the continuous Laplace operator.

The right-hand side of the discrete Poisson equation (\ref{eq:Poisson}) may be calculated similarly noting that only two
nodes contribute with opposite signs. Let us assume that the failing link is parallel to the
$x$-axis connecting nodes $r$ and $s$ located at $\vec{r}_r=(x_r,y_r)^\top$ and $\vec{r}_s=(x_s,y_s)^\top=(x_r+h,y_r)^\top$. %
The discrete version of the right-hand side reads
\begin{align*}
    \vec q = \frac{F_{rs}}{1 - b_{rs} \vec \nu_{rs}^\top \vec{B}^\dagger \vec \nu_{rs}} \vec \nu_{rs}.
\end{align*}
We will now derive the continuum version of this equation. First, the flow on the failing link before the outage $F_{rs}$ may be calculated as
\begin{align*}
    F_{rs} &\equiv b(x_r+h/2,y_r) (\theta(x_r+h,y_r) - \theta(x_r,y_r))  \nn \\
    &= h b(x_r,y_r) \frac{\partial \theta}{\partial x} + \mathcal{O}(h^2) \\
    &= h \vec F_x(x_r,y_r) + \mathcal{O}(h^2),
\end{align*}
where $\vec F(x,y) = b(x,y) \vec \nabla \theta(x,y)$ is the continuum flow before the outage.
Second, the vector $\vec \nu_{rs}$ can be formally interpreted in terms of the two-dimensional delta function $\delta(x,y)$ and reads for the given link failure
\begin{align*}
\vec \nu_{rs} &\equiv \delta(x-x_r+h,y-y_r) - \delta(x-x_r,y-y_r) \\
&= h \frac{\partial\delta(x-x_r,y-y_r)}{\partial x} + \mathcal{O}(h^2).
\end{align*}
Finally, let us assume that a continuum version of the Green's function $\vec B^\dagger$ exists. Then the denominator may be calculated as
\begin{align*}
  b_{rs} \vec \nu_{rs}^\top \vec{B}^\dagger \vec \nu_{rs} &\equiv  h^2\cdot b(x_r+h/2,y_r)\Big(\int\partial_y\delta(x-x_r,y-y_r)\\
  & \qquad b^\dagger(x,y) \partial_x\delta(x-x_r,y-y_r)\,\mathrm{d}x\, \mathrm{d}y\Big)\\
 &= h^2 b(x_r,y_r) \frac{\partial^2 b^\dagger(x_r,y_r)}{\partial x \partial y} + \mathcal{O}(h^3),
\end{align*}
where $b^\dagger(x,y)$ is the aforementioned continuum version.

Thus, in total we obtain after expanding the entire right-hand side to lowest order in the continuum limit
\begin{align}
    q(x,y) = h^2 \vec F(x_r,y_r)^\top \vec \nabla \delta(x-x_r,y-y_r) + \mathcal{O}(h^3).
   \label{eq:PRHS-cont}
\end{align}
Here, $\vec F(x_r,y_r)$ is assumed to be parallel to the dipole axis, i.e. the direction of the link failure, which is either the $x$- or the $y$-direction for the given setting.

\begin{figure*}[t!]
    \begin{center}
        \includegraphics[width=\textwidth]{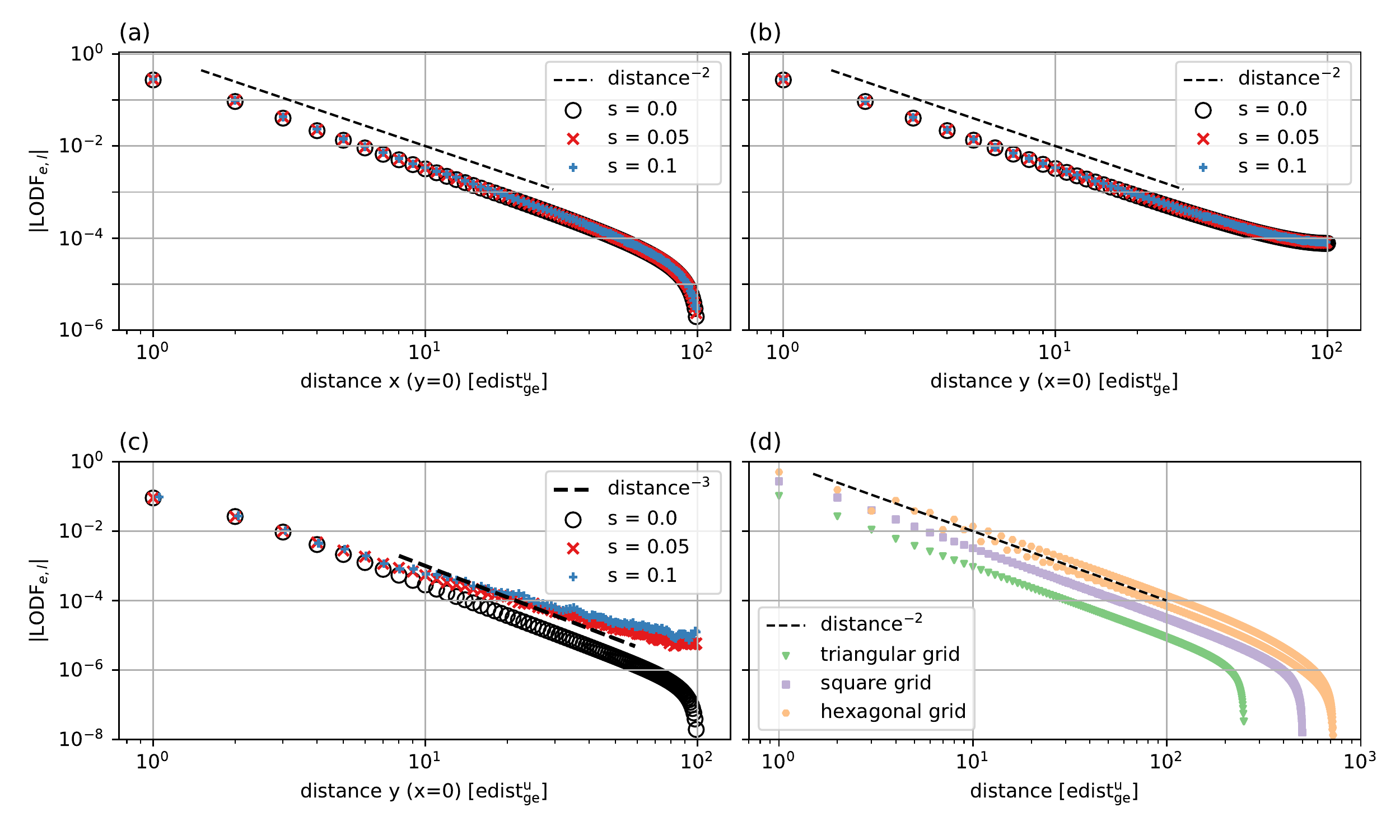}
    \end{center}
    \caption{Scaling of LODFs versus geodesic distance to failing edge for different unweighted topologies and different levels of sparsities.
    (a--c) LODFs are evaluated in different directions from the link failure and averaged over 100 realizations of square lattices from which a fraction of $s=0$ (black circles), $s=0.05$ (red crosses) and $s=0.1$ (blue plusses) links was removed randomly. %
    The failing edge is assumed to be located in $x$-direction at the center of a square grid of size $201\times 202$, cf.~figure~\ref{fig:lattice}. LODFs are calculated for (a) links along the $x$-direction (between $(x,0)$ and $(x+1,0)$), (b) links along the $y$-direction parallel to failing link (between $(0,y)$ and $(1,y)$) and (c) links along the $y$-direction perpendicular to the failing link (between $(0,y)$ and $(0,y+1)$). The $\rm dist^{-2}$ (a,b) and $\rm dist^{-3}$ (c) scaling agrees with the dipole scaling predicted using Equation~(\ref{eq:dipole-field}) as indicated by black lines. The levels of sparsity considered here do not show any effect on the scaling when considering directions parallel to the dipole axis (a,b), but the scaling becomes more long-ranged with increasing sparsity in direction perpendicular to this axis (c).
    (d) The $\rm dist^{-2}$ scaling is not unique to square grids (purple squares, size $1000\times 1000$) but may also be observed for the two other regular tilings, namely the hexagonal grid (orange hexagons, $150\times 150$ hexagons) and the triangular grid (green triangles, size $1001\times 500$). LODFs were again computed along the shortest path in $x$-direction for links oriented parallel to the dipole. The branching for the hexagonal grid is due to the fact that the path in $x$-direction is non-unique and non-straight here, such that one of the shortest paths was chosen arbitrarily. Deviations from the scaling occur for large distances due to finite-size effects.
    }
    \label{fig:sparse-lattice-dist}
\end{figure*}

We can now formally divide left-hand side (\ref{eq:PLHS-cont}) and right-hand side
(\ref{eq:PRHS-cont}) by $h^2$ and take the limit $h\rightarrow 0$ to obtain the final
continuum limit of the Poisson equation,
\be
    \nabla \cdot \left(b(x,y) \nabla \psi \right) =  - \vec q^\top \nabla \delta(x-x_r,y-y_r),
\ee
where the source term is $\vec q(x_r,y_r) = \vec F(x_r,y_r)$,
the unperturbed current field.
We note that we obtain the same continuum limit regardless of whether we use
Equation (\ref{eq:Poisson-1}) or (\ref{eq:Poisson}) to do the expansions.
Thus, the non-locality that is encoded in Equation (\ref{eq:Poisson})
is lost in the continuum formulation.

If the link weights are homogeneous, $b(x,y)= b$, and the failing link is assumed to be located at the origin $(x_r,y_r)=(0,0)$ the solution is given by the
well-known two-dimensional dipole field
\begin{align}
   \psi(\vec r) &= \frac{\vec q \cdot \vec r}{\| \vec r \|^2} \\
   \Delta \vec F(\vec r) &= b\cdot\left(\frac{\vec q }{\| \vec r \|^2} - 2 \vec r \frac{\vec q \cdot \vec r}{\| \vec r \|^4}\right) .
   \label{eq:dipole-field}
\end{align}
We thus obtain a fully analytic solution in the continuum limit. This solution reveals that the impact of link failures decays algebraically in homogeneous lattices. We consider this decay along two different axes. Assume the dipole to be located at the origin in $x$-direction, such that $\vec{q}=(\varepsilon,0)^\top$ where $\varepsilon\ll 1$ is some small real number. First, consider the decay in $x$-direction where $\vec{r}=(x,0)^\top$. In this case, we obtain for the decay of the potential and the flow changes
\begin{align*}
     \psi((x,0)^\top)&=\frac{\varepsilon\cdot x}{x^2}\varpropto\frac{1}{x}\\
     \Delta \vec F((x,0)^\top)&=b\left(\frac{\varepsilon}{x^2}-2x\frac{\varepsilon x}{x^4},0\right)^\top=-b\left(\frac{\varepsilon}{x^2},0\right)^\top.
\end{align*}
This decay in the flow changes may also be observed in the discrete version of Equation~(\ref{eq:PRHS-cont}) and is shown in Figure~\ref{fig:sparse-lattice-dist}, (a), for a line failure in a discrete square grid. Along the same lines, we may quantify the decay in $y$-direction $\vec{r}=(\varepsilon,y)^\top$ for the same dipole orientation. In this case, we obtain 
\begin{align*}
     \psi((\varepsilon,y)^\top)&=\frac{\varepsilon^2}{y^2}\varpropto\frac{1}{y^2}\\
     \Delta \vec F((\varepsilon,y)^\top)&\approx b\left(\frac{\varepsilon}{y^2},-2\frac{\varepsilon^2}{|y|^3}\right)^\top.
\end{align*}
Here, we assumed the position vector to be dominated by its $y$-component $\varepsilon\ll y$ such that $||\vec r||\approx |y|$. In total, we observe a $y^{-3}$-scaling in the flow changes in $y$-direction perpendicular to the dipole source and a $y^{-2}$-scaling in $y$-direction parallel to the dipole source, see Figure~\ref{fig:sparse-lattice-dist}, (a--c).

\section{Rigorous bounds on the dipole strength}
\label{sec:dipole}

\begin{figure}[tb]
\begin{center}
\includegraphics[width=\columnwidth]{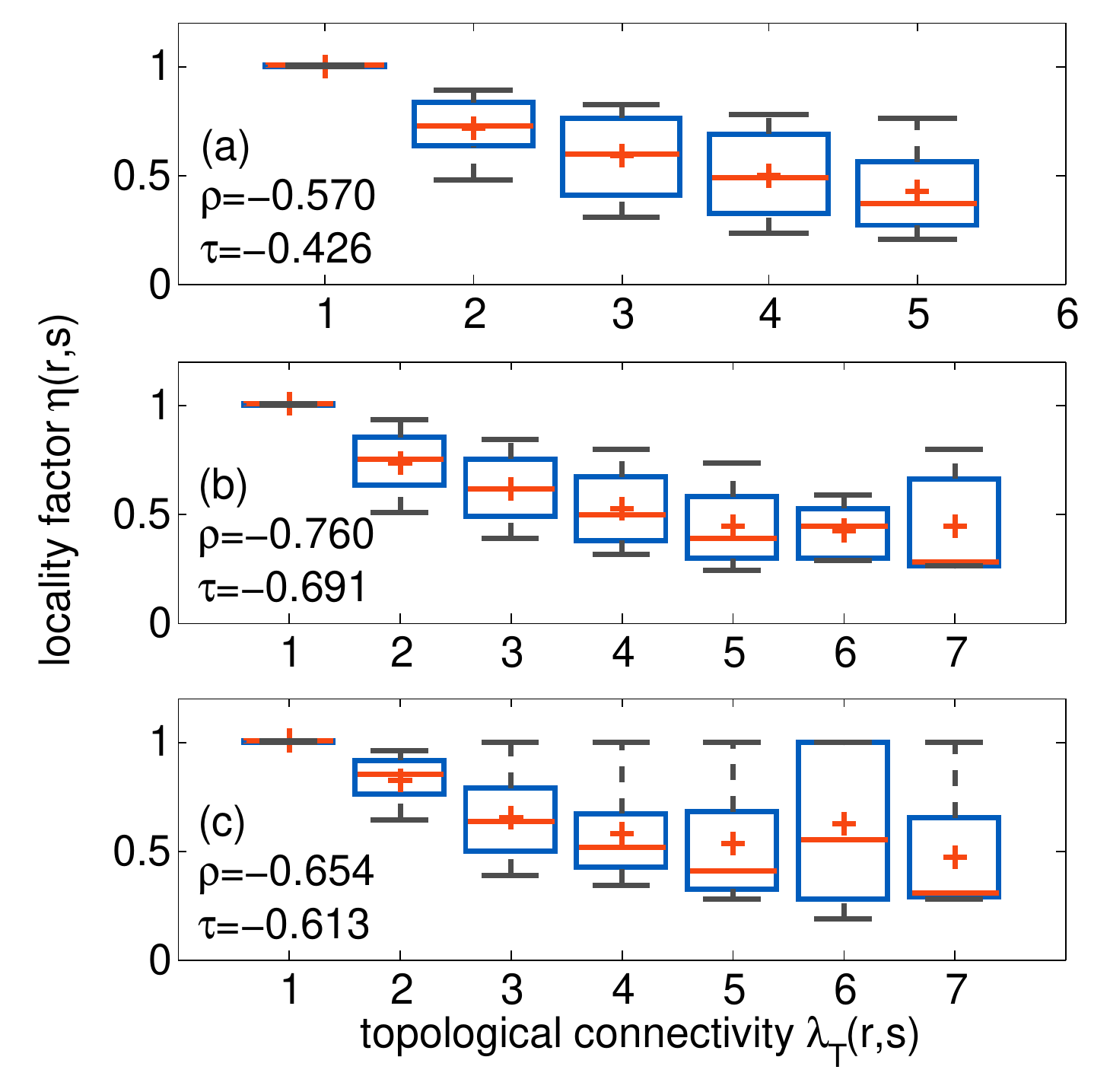}
\end{center}
\caption{
\label{fig:eta-topcon}
The locality factor $\eta(r,s)$ generally decreases with the topological connectivity $\lambda_T(r,s)$.
Values of $\eta(r,s)$ for all links $(r,s)$ with given value of $\lambda_T(r,s)$
are shown in a box-whisker-plot: the cross gives the mean, the read line the
median, the box the 25\%/75\% quantiles and the and the grey horizontal line the
9\%/91\% quantiles.
Results are shown for three standard test grids:
(a) `case118',
(b) `case1354pegase',
(c) `case2383wp'~\cite{matpower}.
The values of Pearson's correlation coefficient $\rho$ and Kendall's rank correlation coefficient
$\tau$ are given for each test grid.
}
\end{figure}

\begin{figure}[tb]
\begin{center}
\includegraphics[width=\columnwidth]{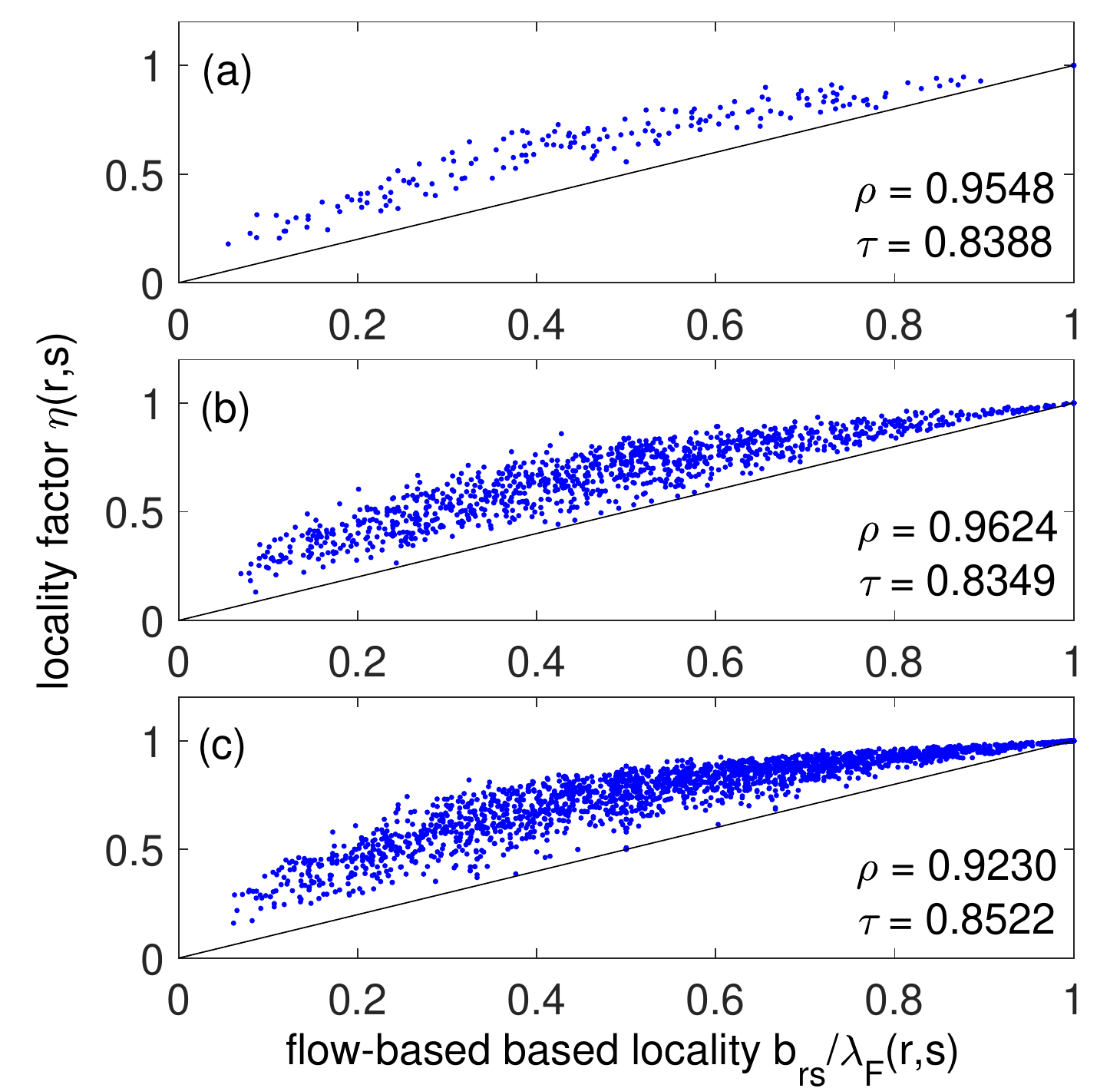}
\end{center}
\caption{
\label{fig:eta-cutcon}
The locality factor $\eta(r,s)$ is estimated by the topology based measure
$b_{rs}/\lambda_F(r,s)$ with high quality.
Results are shown for three standard test grids:
(a) `case118',
(b) `case1354pegase',
(c) `case2383wp'~\cite{matpower}.
The values of Pearson's correlation coefficient $\rho$ and Kendall's rank correlation coefficient
$\tau$ are given for each test grid. The black line is the lower bound given by
proposition \ref{prop:eta-bounds}.
}
\end{figure}

\begin{figure}[tb]
\begin{center}
\includegraphics[width=\columnwidth]{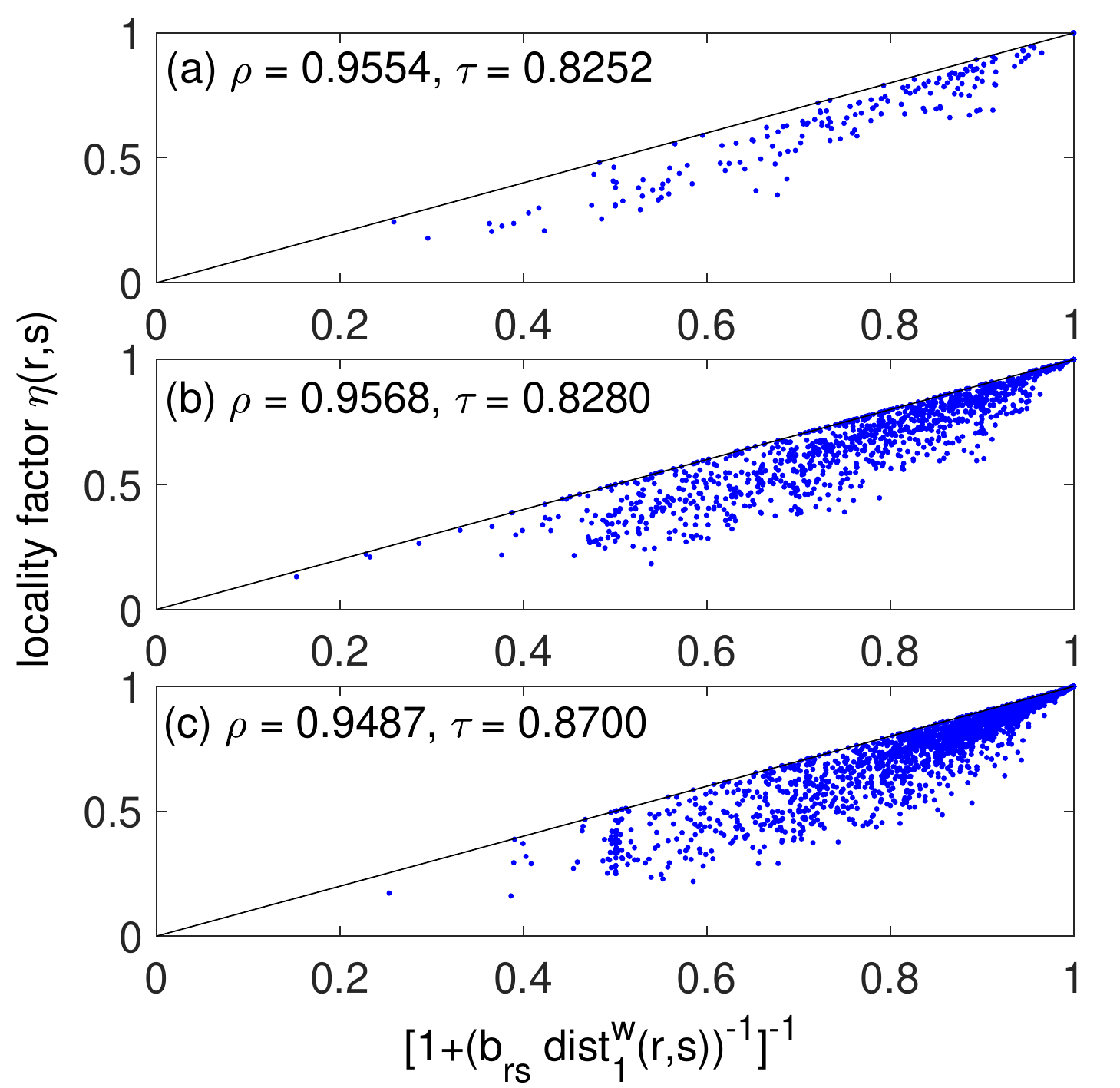}
\end{center}
\caption{
\label{fig:eta-shortpath}
An upper bound for the locality factor $\eta(r,s)$ is found in terms of the length of the shortest alternative path from $r$ to $s$, assigning to each link $(m,n)$ a weight $b_{mn}^{-1}$. The black line is the lower bound given by proposition \ref{prop:eta-upperbound} and the blue dots give results for all links in three standard test grids:
(a) `case118',
(b) `case1354pegase',
(c) `case2383wp'~\cite{matpower}.
High values of Pearson's correlation coefficient $\rho$ and Kendall's rank correlation coefficient $\tau$ show that the expression in Proposition~\ref{prop:eta-upperbound} provides a good estimate for the locality factor $\eta(r,s)$, not only a lower bound.
}
\end{figure}

We now turn to realistic networks with irregular topologies. The change in the nodal potentials or voltage phase angles $\psi_n$ and flows $\Delta F_{m \rightarrow n}$ is determined by the discrete Poisson equation (\ref{eq:Poisson}). We first consider the right-hand side of this equation, the dipole strength, which describes the gross response of the network flows to the outage. This response is proportional to the initial flow of the failing edge $F_{rs}$ and the factor
\be
    (1 - b_{rs} \vec \nu_{rs}^\top \vec B^\dagger  \vec \nu_{rs})^{-1} =: (1 - \eta(r,s))^{-1}
\ee

The factor $1-\eta(r,s)$ describes the \emph{nonlocality} of the network response to a local perturbation at link $(r,s)$. To see this, consider a grid where the real power $\Delta P$ is injected and withdrawn at the terminal nodes of the link $(r,s)$. The direct flow over the link is given by
\be
   F_{r \rightarrow s} = b_{rs} \vec \nu_{rs}^\top \vec B^{\dagger} \vec \nu_{rs} \,  \Delta P
       = \eta(r,s) \Delta P
\ee
whereas the total flow is just given by $\Delta P$. The factor $\eta(r,s)$ thus measures the fraction of the flow which is transmitted  directly and $1-\eta(r,s)$ is the fraction transmitted non-locally via other pathways. Hence, $1-\eta(r,s)$ can also be seen as a measure of redundancy. A high non-local flow indicates that there are strong alternative routes from $r$ to $s$ in addition to the direct link $(r,s)$. If no alternative path exists, the flow must be routed completely via the direct link such that $1-\eta(r,s) = 0$.

We conclude that the properties of alternative and direct paths is decisive for the understanding of flow rerouting. Before we proceed, we thus review the formal definition of a path in graph theory.
\begin{defn}
\label{def:path}
A  path from vertex $r$ to vertex $s$ is defined as an ordered set of vertices
\be
     (v_0 = r, v_1, v_2, \ldots, v_k = s),\nonumber
\ee
where two subsequent vertices must be connected by an edge and no vertex is visited twice.
Two paths are called independent if they share no common edge. The unweighted length of such a path is defined as the number of steps $k$, while the weighted path length is given by the sum of the edge weights along the path, $\sum_{j=1}^k w_{v_{j-1} v_j}$. In this work, the edge weights are given by the inverse susceptances $w_{ij} = 1/b_{ij}$. The geodesic or shortest path distance of two vertices $r$ and $s$ is defined as the length of the shortest path from $r$ to $s$.
\end{defn}

The interpretation as a redundancy measure directly relates the factor $1-\eta(r,s)$ to the topology of the network. A first rough estimate can be obtained from the topological connectivity $\lambda_T(r,s)$, which is defined as the number of independent paths from node $r$ to node $s$. A comparison for several test grids in Figure \ref{fig:eta-topcon} shows that $\eta(r,s)$ decreases with $\lambda_T(r,s)$ on average as expected, but that there is a large heterogeneity between the links.

To obtain a better topological estimate for the locality factor we need to take into account the heterogeneity of the link weights. The topological connectivity $\lambda_T(r,s)$ counts the minimum number of edges which have to be removed to disconnect the nodes $r$ and $s$. We can define a weighted analog $\lambda_F(r,s)$ as the \emph{minimum capacity} which has to be removed to disconnect the nodes $r$ and $s$. This is a classical problem in graph theory, where it is referred to as the \emph{minimum cut}~\cite{Dies10}.
We will now elaborate this quantity in a definition. An \emph{$(r,s)$-cut} can be defined as follows. Let $r\in S\subset V$ and $s\in V\setminus S$ be two vertices taken from the two disjoint sets. The $(r,s)$-cut is defined as the set of edges $\delta(S) = \{(u,v)\in E\ |\ u\in S,\ v\in V\setminus S \text{ or } v\in S,\ u\in V\setminus S \}$ connecting the two disjoint vertex-sets. The set of edges $\delta^+(S) = \{(u,v)\in E\ |\ u\in S,\ v\in V\setminus S \}$ is referred to as the \textit{forward edges} of the cut. The capacity $C$ of a cut $\delta(S)$ and the corresponding minimum capacity $\lambda_F(r,s)$ between $r$ and $s$ are then given by
\begin{align*}
     C(\delta(S)) &= \sum_{(i,j)\in \delta^+(S)} b_{ij},\\
     \lambda_F(r,s) &= \min_{\{S\subset V\ |\ r\in S,\ s\in V\setminus S\}} C(\delta(S)).
\end{align*}
By virtue of the max-flow-min-cut theorem \cite{Ahuj13}, $\lambda_F(r,s)$ is equivalent to the maximum flow which can be transmitted from $r$ to $s$ respecting link capacity limits:
\begin{align}
    \label{eq:maxflow}
    & \lambda_F(r,s) = \max_{\vec F} \sum_{n=1}^N F_{r \rightarrow n} \nn \\
    & \mbox{such that} \;
     |F_{mn}| \le b_{mn}  \; \forall \, \mbox{edges} \, (m,n) \nn \\
    & \qquad  \qquad \sum_{n=1}^N F_{mn} = 0 \; \forall \, m \neq r,s
\end{align}
Numerous efficient algorithms exist to calculate this maximum flow without performing the optimization explicitly \cite{Ahuj13}. The ratio $b_{rs}/\lambda_F(r,s)$ then gives the ratio of direct flow to total flow from $r$ to $s$ and thus provides an adequate topology-based estimate for the locality factor $\eta(r,s)$. Indeed, we can prove that it provides a rigorous lower bound.
\begin{prop}
\label{prop:eta-bounds}
The algebraic locality factor $\eta(r,s)$ is bounded by
\be
    \frac{b_{rs}}{\lambda_F(r,s)} \le \eta(r,s) \le 1.\nonumber
\ee
\end{prop}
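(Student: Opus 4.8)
The plan is to recognise the matrix element $\vec\nu_{rs}^\top\vec B^\dagger\vec\nu_{rs}$ as the effective resistance $R_{rs}$ between the terminal nodes of the failing link in the resistor network whose conductances are the line capacities $b_\ell$, so that $\eta(r,s)=b_{rs}R_{rs}$, and then to sandwich $R_{rs}$ using the two dual variational characterisations of the effective resistance. Concretely, I would first set $\vec\psi=\vec B^\dagger\vec\nu_{rs}$. Since $\mathbf 1^\top\vec\nu_{rs}=0$ and we may assume the network connected (otherwise restrict to the component containing $(r,s)$; a bridge is excluded by assumption), $\vec\psi$ is the unique solution of $\vec B\vec\psi=\vec\nu_{rs}$ with $\mathbf 1^\top\vec\psi=0$, and one reads off $R_{rs}=\vec\nu_{rs}^\top\vec\psi=\psi_r-\psi_s=\vec\psi^\top\vec B\vec\psi=\sum_{(m,n)\in E}b_{mn}(\psi_m-\psi_n)^2\ge 0$. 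This already yields $\eta(r,s)\ge 0$; the two nontrivial inequalities come from comparing $\vec\psi$ with suitable trial objects.

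For the upper bound I would use Thomson's principle: $R_{rs}$ equals the minimum of the dissipated power $\sum_\ell f_\ell^2/b_\ell$ over all unit flows $\vec f$ from $r$ to $s$, i.e.\ flows obeying Kirchhoff's current law with unit injection at $r$ and withdrawal at $s$. The flow that routes the whole unit directly along the link $(r,s)$ is admissible and dissipates exactly $1/b_{rs}$, hence $R_{rs}\le 1/b_{rs}$ and $\eta(r,s)=b_{rs}R_{rs}\le 1$; equality would force the direct flow to be optimal, which is precisely the excluded bridge case.

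For the lower bound I would invoke the dual (Dirichlet) principle: $R_{rs}^{-1}$ equals the minimum of the Dirichlet energy $\sum_{(m,n)\in E}b_{mn}(\phi_m-\phi_n)^2$ over all potentials $\vec\phi$ with $\phi_r=1$ and $\phi_s=0$. Choosing $\vec\phi$ to be the indicator of a vertex set $S$ with $r\in S$, $s\in V\setminus S$ that realises the minimum cut makes $(\phi_m-\phi_n)^2$ equal to $1$ exactly on the cut edges $\delta(S)$ and $0$ elsewhere, so the Dirichlet energy equals the cut capacity $C(\delta(S))$, which for this choice of $S$ equals $\lambda_F(r,s)$ by max-flow-min-cut as in Eq.~(\ref{eq:maxflow}). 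Hence $R_{rs}^{-1}\le\lambda_F(r,s)$, i.e.\ $\eta(r,s)=b_{rs}R_{rs}\ge b_{rs}/\lambda_F(r,s)$, which completes the proposition.

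The only genuinely nonroutine ingredient is the pair of variational identities for $R_{rs}$ together with the identification $R_{rs}=\vec\nu_{rs}^\top\vec B^\dagger\vec\nu_{rs}$; these are classical for weighted graph Laplacians but I would state them (or note in one line that their Euler--Lagrange equations are exactly $\vec B\vec\psi=\vec\nu_{rs}$, whose solution plugged back into the energy returns $R_{rs}$). An equally acceptable, fully elementary derivation of the lower bound stays entirely within the flow language already used in the paper: the net flow of $\vec\psi$ across any $(r,s)$-cut equals the unit injection, $\sum_{m\in S,\,n\notin S}b_{mn}(\psi_m-\psi_n)=1$, and bounding each crossing term by $b_{mn}|\psi_m-\psi_n|\le b_{mn}(\psi_r-\psi_s)$ — where the discrete maximum principle forces $\psi$ to take its extreme values at the sole source $r$ and sink $s$ — gives $1\le(\psi_r-\psi_s)\,C(\delta(S))$ and hence $\eta(r,s)\ge b_{rs}/\lambda_F(r,s)$ after minimising over cuts. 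I expect the bookkeeping with the pseudo-inverse (range and null-space conditions on $\vec B$) to be the only place where care is genuinely needed.
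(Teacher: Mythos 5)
Your proof is correct, but it takes a genuinely different route from the paper's. The paper interprets $\eta(r,s)$ as the ratio of direct to total flow when a unit potential difference is imposed between $r$ and $s$, invokes its maximum-principle result (Proposition~\ref{prop:decay}) to show that the induced physical flow obeys $|F_{mn}|\le b_{mn}$ on every edge, and then observes that this flow is feasible for the max-flow problem~(\ref{eq:maxflow}), so its value is at most $\lambda_F(r,s)$ -- a feasibility argument on the \emph{flow} side of max-flow--min-cut, and it leaves the upper bound $\eta\le 1$ implicit. You instead identify $\eta(r,s)=b_{rs}R_{rs}$ with the effective resistance and sandwich $R_{rs}$ by the two dual variational principles: Thomson with the trial unit flow routed entirely through the edge $(r,s)$ gives $R_{rs}\le 1/b_{rs}$, hence $\eta\le 1$, and Dirichlet with the trial potential equal to the indicator of a minimum $(r,s)$-cut gives $R_{rs}^{-1}\le\lambda_F(r,s)$, hence $\eta\ge b_{rs}/\lambda_F(r,s)$ -- a \emph{cut}-side argument. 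What your route buys: it needs no maximum principle at all (so it is independent of Proposition~\ref{prop:decay}), it proves the upper bound explicitly rather than implicitly, and the only external input is the classical Thomson/Dirichlet characterization of effective resistance, which you correctly tie back to $\vec B\vec\psi=\vec\nu_{rs}$ (the paper itself notes the resistance-distance interpretation of $\vec\nu_{rs}^\top\vec B^\dagger\vec\nu_{rs}$). What the paper's route buys: it stays entirely within the flow language it has already set up and reuses its own proposition. Your closing ``elementary'' variant (unit net flow across a cut plus the maximum principle bounding each potential drop by $\psi_r-\psi_s$) is essentially the cut-side dual of the paper's argument and is also sound; the pseudo-inverse bookkeeping you flag is handled exactly as you say, since $\vec\nu_{rs}\perp\mathbf 1$ for a connected (non-bridge) configuration.
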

A proof is given in appendix \ref{sec:proof-eta-bounds}. Numerical simulations for several test grids reported in Figure \ref{fig:eta-cutcon} reveal that the topological estimate not only provides a lower bound, but a high-quality estimate for the algebraic locality factor. The Pearson correlation coefficient $\rho$ between $\eta(r,s)$ and $b_{rs}/\lambda_F(r,s)$ exceeds $0.92$ for the three grids under consideration.

We arrive at the %
conclusion that %
the dipole strength given by $F_{rs} (1- \eta(r,s))^{-1}$ generally decreases with the redundancy measures $\lambda_T(r,s)$ and $\lambda_F(r,s)$.

An upper limit for the locality factor $\eta(r,s)$ can be obtained from an elementary topological distance measure. We consider the weighted geodesic distance of the two nodes $r$ and $s$ \emph{after} the failure of the direct link $(r,s)$, which we denote by ${\rm dist}_1^\mathrm{w}(r,s)$. The superscript w stands for weighted distance, the subscript 1 for the distance measured in the graph after removal of the link $(r,s)$.
We then have the following upper bound.
\begin{prop}
\label{prop:eta-upperbound}
The algebraic locality factor $\eta(r,s)$ is bounded from above by
\be
     \eta(r,s) \le  \left[ 1 + \frac{1}{b_{rs} \times {\rm dist}_1^\mathrm{w}(r,s)} \right]^{-1} \, .
     \nonumber%
\ee
\end{prop}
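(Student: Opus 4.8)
The plan is to recognise the locality factor as an effective resistance and then invoke a classical monotonicity principle. First I would use the definition $\eta(r,s) = b_{rs}\,\vec\nu_{rs}^\top \vec B^\dagger \vec\nu_{rs}$ and observe that $\Omega_{rs} := \vec\nu_{rs}^\top \vec B^\dagger \vec\nu_{rs}$ is exactly the effective resistance between $r$ and $s$ in the resistor network obtained by assigning resistance $1/b_{mn}$ to each edge $(m,n)$. This is the standard electrostatic interpretation of the weighted-Laplacian pseudoinverse already exploited in the paper: since $\vec B$ is a connected graph Laplacian, $\ker \vec B = \mathrm{span}\{\mathbf{1}\}$ and $\vec\nu_{rs} \perp \mathbf{1}$, so $\vec\psi = F_{rs}\,\vec B^\dagger \vec\nu_{sr}$ solves Eq.~(\ref{eq:Poisson}) and the potential drop $\psi_r - \psi_s = F_{rs}\,\Omega_{rs}$ has the form (injected current)$\times$(resistance). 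Hence the proposition is equivalent to $b_{rs}\,\Omega_{rs} \le \big[1 + (b_{rs}\,{\rm dist}_1^{\mathrm w}(r,s))^{-1}\big]^{-1}$.

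Next I would decompose the network into two two-terminal sub-networks joined only at the terminals $r$ and $s$: the single direct link $(r,s)$, with conductance $b_{rs}$, and the graph $G_1$ obtained by deleting that link, whose effective resistance between $r$ and $s$ I call $\Omega^{(1)}_{rs}$. Because the case of a bridge has been excluded, $r$ and $s$ stay connected in $G_1$, so $\Omega^{(1)}_{rs} < \infty$. Parallel composition of conductances gives the exact identity $\Omega_{rs}^{-1} = b_{rs} + (\Omega^{(1)}_{rs})^{-1}$, and therefore $\eta(r,s) = b_{rs}\,\Omega_{rs} = \big[1 + (b_{rs}\,\Omega^{(1)}_{rs})^{-1}\big]^{-1}$ \emph{exactly}. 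It then remains to bound $\Omega^{(1)}_{rs}$ from above: by Rayleigh's monotonicity law — equivalently, Thomson's (Dirichlet) variational principle evaluated on the unit current that flows entirely along a shortest $r$–$s$ path in $G_1$ — the effective resistance never exceeds the series resistance along any single path, so $\Omega^{(1)}_{rs} \le {\rm dist}_1^{\mathrm w}(r,s)$, the weighted shortest-path length with edge weights $1/b_{mn}$. Since $x \mapsto [1+1/x]^{-1}$ is increasing on $(0,\infty)$, substituting this into the identity above yields the claimed bound. (A self-contained alternative that bypasses parallel composition: apply Thomson's principle to the test flow that sends a fraction $\alpha$ of the unit current through the direct link and $1-\alpha$ along the shortest alternative path; minimising the dissipated power $\alpha^2/b_{rs} + (1-\alpha)^2 d$ over $\alpha$, with $d = {\rm dist}_1^{\mathrm w}(r,s)$, gives $\alpha = b_{rs}d/(1+b_{rs}d)$ and dissipated power $d/(1+b_{rs}d)$, whence $\eta(r,s) = b_{rs}\Omega_{rs} \le b_{rs}d/(1+b_{rs}d)$.)

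I expect the only genuine subtlety to be the first step: carefully justifying the identification of $\vec\nu_{rs}^\top \vec B^\dagger \vec\nu_{rs}$ with an effective resistance — handling the Moore–Penrose pseudoinverse on the orthogonal complement of $\ker \vec B$ — and the parallel-composition identity. Once the resistor-network picture is in place, the remainder is a one-line application of Rayleigh/Thomson, and as a byproduct one even recovers that $\eta(r,s) < 1$ strictly whenever $(r,s)$ is not a bridge, consistent with Proposition~\ref{prop:eta-bounds}.
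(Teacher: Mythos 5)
Your proof is correct and follows essentially the same route as the paper's: both interpret $\eta(r,s)=b_{rs}\,\vec \nu_{rs}^\top \vec B^\dagger \vec \nu_{rs}$ electrically and bound the effective conductance between $r$ and $s$ from below by that of the direct link in parallel with the shortest alternative path, via Rayleigh/Thomson monotonicity (the paper phrases this as ``reintroducing all edges can only increase the total flow'' at unit potential drop). Your exact parallel-composition identity $\eta(r,s)=\left[1+\bigl(b_{rs}\,\Omega^{(1)}_{rs}\bigr)^{-1}\right]^{-1}$ is a minor sharpening of that step, but the substance of the argument is the same.
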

A proof is given in appendix \ref{sec:proof-eta-upperbound}. Numerical simulations for several test grids reported in Figure \ref{fig:eta-shortpath} reveal that the estimate in terms of the shortest path length not only provides an upper bound, but a high quality estimate for the algebraic locality factor. The Pearson correlation coefficient $\rho$ exceeds $0.94$ for the three grids under consideration.

We further note that the factor $\vec \nu_{rs}^\top \vec B^{\dagger} \vec \nu_{rs}$  can also be interpreted as a distance measure -- the resistance distance \cite{Klei93,Xiao2003}. We come back to the quantification of distances in flow networks later in section \ref{sec:decay-num}.

\section{Spatial distribution of flow rerouting}

We now turn to the spatial aspects of flow rerouting in general network topologies. We first discuss some rigorous results, showing how the network topology determines the rerouting flows. Then, we return to the regular tilings and study the effect of increasing sparsity in these topologies on the dipole scaling. Finally, we suggest a new measure of distance for flow rerouting and examine its performance on realistic network topologies taken from power grids. %

\subsection{Rigorous results}
\label{sec:rigorous}

To start off, we first present a lemma due to Shapiro~\cite{Shap87}, relating the flow changes after a link failure in an unweighted graph solely to the topology of the underlying network.
\begin{lemma}
Consider an unweighted network with a unit dipole source along the edge $(r,s)$, i.e. a unit inflow at node $r$ and unit outflow at node $s$. Then the flow along any other edge $(m,n)$ is given by
\begin{equation}
    F_{m \rightarrow n} = \frac{
     \mathcal{N}(r,m\rightarrow n,s)-\mathcal{N}(r,n\rightarrow m,s)}{\mathcal{N}},\nonumber
\end{equation}
where $\mathcal{N}(r,m\rightarrow n,s)$ is the number of spanning trees that contain a path from $r$ to $s$ of the form $r,\ldots,m,n,\ldots,s$ and $\mathcal{N}$ is the total number of spanning trees of the graph. \label{lem:electricallemma}
\end{lemma}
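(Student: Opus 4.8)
The plan is to recognise this lemma as the combinatorial, spanning-tree form of the electrical current rule, and to prove it by exhibiting a tree-counting ``potential'' that solves Kirchhoff's equations. For an unweighted network all $b_\ell=1$, so $\vec B$ is the ordinary graph Laplacian $\vec L$; a unit dipole source along $(r,s)$ then amounts to solving $\vec L\vec\psi=\vec\nu_{rs}$, and the quantity to evaluate is $F_{m\rightarrow n}=\psi_m-\psi_n$. Since the graph is connected, $\ker\vec L$ is spanned by the all-ones vector and the flows are gauge invariant, so I may fix $\psi_s=0$. Deleting the row and column indexed by $s$ then gives the nonsingular \emph{grounded} system $\bar{\vec L}\,\bar{\vec\psi}=\vec e_r$, whose unique solution I must identify combinatorially, recalling that $\det\bar{\vec L}=\mathcal{N}$ by the Matrix-Tree theorem.

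First I would introduce the candidate potential: for a vertex $m$, let $\Phi(m)$ be the number of spanning $2$-forests $(F_1,F_2)$ of the graph with $r,m\in F_1$ and $s\in F_2$, and set $\Phi(s)=0$. The elementary bijection $T\mapsto T\setminus\{(m,n)\}$, with inverse $F\mapsto F\cup\{(m,n)\}$, shows that a spanning tree whose unique path from $r$ to $s$ reads $r,\dots,m,n,\dots,s$ corresponds exactly to a spanning $2$-forest with $r,m\in F_1$ and $s,n\in F_2$; hence $\mathcal{N}(r,m\rightarrow n,s)$ counts precisely those $2$-forests. Classifying the $2$-forests counted by $\Phi(m)$ according to whether a fixed neighbour $n$ of $m$ lies in $F_1$ or in $F_2$, and observing that the number with \emph{both} $m$ and $n$ in $F_1$ is symmetric under $m\leftrightarrow n$, one arrives at the discrete-gradient identity
\begin{align*}
   \Phi(m)-\Phi(n)=\mathcal{N}(r,m\rightarrow n,s)-\mathcal{N}(r,n\rightarrow m,s)\qquad\text{for every edge }(m,n).
\end{align*}

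Next I would verify that $\Phi$ solves the grounded system up to the global factor $\mathcal{N}$. Summing the identity above over all neighbours of a fixed $m\neq s$ and orienting, within each spanning tree $T$, its unique $r$-to-$s$ path from $r$ towards $s$, one obtains $(\bar{\vec L}\,\Phi)_m=\sum_{T}\bigl(o_T(m)-i_T(m)\bigr)$, where $o_T(m)$ and $i_T(m)$ count the path edges at $m$ oriented out of and into $m$. In every tree the path uses two edges at each of its interior vertices (one in, one out), one edge out of $r$, and no edge at $s$; hence for $m\neq s$ the summand equals $1$ when $m=r$ and vanishes otherwise, so $\bar{\vec L}\,\Phi=\mathcal{N}\,\vec e_r$. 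Invertibility of $\bar{\vec L}$ now gives $\psi_m=\Phi(m)/\mathcal{N}$ (with $\psi_s=0$), whence
\begin{align*}
   F_{m\rightarrow n}=\psi_m-\psi_n=\frac{\Phi(m)-\Phi(n)}{\mathcal{N}}=\frac{\mathcal{N}(r,m\rightarrow n,s)-\mathcal{N}(r,n\rightarrow m,s)}{\mathcal{N}},
\end{align*}
which is the assertion. Alternatively, the same identity follows from the all-minors Matrix-Tree theorem applied to the bilinear form $\vec\nu_{mn}^\top\vec L^\dagger\vec\nu_{rs}$, rewritten directly as a signed count of spanning $2$-forests.

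The step I expect to demand the most care is the bookkeeping around the discrete-gradient identity and its use in Kirchhoff's law: one must track edge orientations along the tree paths, confirm the cancellation of the contributions with both $m$ and $n$ on the $r$-side, and handle the boundary vertices $r$ and $s$ separately. Everything else reduces to the Matrix-Tree theorem and the invertibility of the grounded Laplacian.
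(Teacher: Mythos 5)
The paper does not prove this lemma at all: it is quoted as a known result due to Shapiro~\cite{Shap87} (with a weighted generalization in~\cite{Guo17}), so there is no internal proof to compare against. Your argument is a correct, self-contained derivation, and it is essentially the all-minors matrix-tree (transfer-impedance) proof: the bijection $T\mapsto T\setminus\{(m,n)\}$ correctly identifies $\mathcal{N}(r,m\rightarrow n,s)$ with the spanning $2$-forests separating $\{r,m\}$ from $\{s,n\}$; the classification by the side containing $n$, together with the $m\leftrightarrow n$ symmetry of the ``both on the $r$-side'' count, gives the gradient identity $\Phi(m)-\Phi(n)=\mathcal{N}(r,m\rightarrow n,s)-\mathcal{N}(r,n\rightarrow m,s)$; and the per-tree bookkeeping of the oriented $r$-to-$s$ path indeed yields $\bar{\vec L}\,\Phi=\mathcal{N}\vec e_r$, so invertibility of the grounded Laplacian and $\det\bar{\vec L}=\mathcal{N}$ finish the proof. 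This differs from Shapiro's original route, which defines the tree-based flow directly and verifies Kirchhoff's current law and the cycle (potential) law for it, concluding by uniqueness of the unit dipole flow; your potential-based construction automatically satisfies the cycle law and in addition exhibits the Green's-function/$2$-forest interpretation of $\vec B^{\dagger}$, which meshes nicely with the electrostatic viewpoint used elsewhere in the paper. One cosmetic slip: the $r$-to-$s$ path in a spanning tree does use one edge incident to $s$ (its terminal edge), contrary to the phrase ``no edge at $s$''; this is harmless because $s$ is the grounded vertex and the node equation is only evaluated for $m\neq s$.
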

This lemma exactly gives the LODFs in terms of purely topological properties -- the number of spanning trees containing certain paths. A generalization of this theorem to weighted graphs was recently presented in~\cite{Guo17}.

However, counting spanning trees is typically a difficult task such that these results are of limited use for practical applications. Nevertheless, they reveal the importance of certain paths through networks which we will analyze numerically in more detail below. Before we turn to this issue, we derive some weaker, but more easily applicable rigorous results. 

We expect that the flow changes $\Delta F_{mn}$ decay with distance as for the case of the square lattice analyzed in section \ref{sec:square}. Can we establish some rigorous results on the decay with distance for arbitrary networks? Consider the outage of a single edge  and assume that the network remains connected afterwards. We label the failing link as $(r,s)$ such that $F_{r \rightarrow s}> 0$ w.l.o.g. We first consider the change of the nodal potential or voltage phase angles $\psi_n$ and its decay with distance to the failing link $(r,s)$. More specifically, we define the maximum and minimum values of $\psi_n$ attained at a given distance:
\begin{align} 
   u_d   &= \max_{n, {\rm dist}_0^\mathrm{u}(n,r) = d}  \psi_n \nn \\
  \ell_d &=  \min_{n, {\rm dist}_0^\mathrm{u}(n,s) = d}  \psi_n.\nonumber
\end{align}
Here, $\text{dist}_0^\mathrm{u}(n,r)$ denotes the geodesic distance between two nodes $n$ and $r$ in the initial unweighted graph (indicated by the superscript $u$ for unweighted and subscript $0$ for the initial pre-contingency network). We then find the following result.

\begin{prop}
\label{prop:decay}
Consider the failure of a single link $(r,s)$ with $F_{r \rightarrow s} >0$ in a flow network.
Then the maximum (minimum) value of the potential change $\psi_n$ decreases
(increases) monotonically with the distance to nodes $r$ and $s$, respectively:
\begin{align}
   & u_d  \le u_{d-1}, \qquad 1 \le d \le d_{\rm max}.  \nn \\
   & \ell_d  \ge \ell_{d-1},  \qquad 1 \le d \le d_{\rm max}.\nonumber
\end{align}
\end{prop}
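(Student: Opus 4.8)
The plan is to exploit the discrete Poisson equation~(\ref{eq:Poisson-1}), which holds on $\hat{\vec B}$, the Laplacian of the network \emph{after} removal of the link $(r,s)$. Away from $r$ and $s$ the right-hand side vanishes, so $\vec\psi$ is \emph{discretely harmonic} at every node $n \neq r,s$: writing out $(\hat{\vec B}\vec\psi)_n = 0$ gives
\begin{align}
  \psi_n = \frac{1}{\sum_{m \sim n} \hat b_{nm}} \sum_{m \sim n} \hat b_{nm}\, \psi_m,\nonumber
\end{align}
i.e.\ $\psi_n$ is a weighted average of the values of $\psi$ at its neighbours in the post-contingency graph. The first step is to establish the sign structure of $\vec\psi$: since $F_{r\to s}>0$, the source in~(\ref{eq:Poisson-1}) is a positive injection at $s$ and a negative injection at $r$ (note $\vec\nu_{sr}$), so by the maximum principle for the discrete Laplacian $\psi_s = \max_n \psi_n \ge \psi_n \ge \psi_r = \min_n \psi_n$, and one can normalise so that $\psi_r \le 0 \le \psi_s$. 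In particular $u_d \ge 0$ and $\ell_d \le 0$ are the relevant one-sided extremes.

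Next I would prove the monotonicity of $u_d$ by induction on $d$, using the averaging identity together with a careful handling of distances. Fix $d$ with $1 \le d \le d_{\max}$ and let $n^\star$ be a node with ${\rm dist}_0^{\mathrm u}(n^\star,r)=d$ achieving $\psi_{n^\star}=u_d$. The key point is that $n^\star \neq s$ (since $\psi_s$ is the global max and, if $\psi_s>0$, taking $u_0 = \psi_r$... actually one must treat $s$ separately: if the max over the level set $d$ is attained at $s$ one compares with $u_{d-1}$ via a chain of neighbours). Assuming $n^\star \neq r,s$, the harmonicity identity expresses $\psi_{n^\star}$ as a convex combination of $\psi_m$ over neighbours $m$ of $n^\star$ in the \emph{post}-contingency graph. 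Every such neighbour $m$ satisfies ${\rm dist}_0^{\mathrm u}(m,r) \in \{d-1,d,d+1\}$, and at least one neighbour lies at distance $d-1$ (on a geodesic from $n^\star$ to $r$), and crucially that neighbour is still present after removing $(r,s)$ unless $\{n^\star,m\}=\{r,s\}$, which is excluded. Since $\psi_{n^\star}$ is the maximum over level $d$, the neighbours at distance $d$ contribute values $\le u_d$; hence the convex combination forces the neighbour(s) at distance $d-1$ to carry value $\ge u_d$, which gives $u_{d-1} \ge \psi_m \ge u_d = \psi_{n^\star}$. The argument for $\ell_d$ is the mirror image, using that $\psi_r$ is the global minimum and that $\ell_d$ is attained at a node $\neq r,s$ whose post-contingency neighbours include one at distance $d-1$ from $s$.

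The main obstacle is the bookkeeping around the removed edge $(r,s)$ and around the two source nodes themselves. One must check (i) that a geodesic from $n^\star$ to $r$ in the \emph{original} graph that does \emph{not} use the edge $(r,s)$ still exists, so that the distance-$(d-1)$ neighbour is genuinely a neighbour in $\hat{\vec B}$ — this holds for $d\ge 2$ automatically, and for $d=1$ one uses that $(r,s)$ is assumed not to be a bridge, so $r$ has another neighbour, combined with the max principle to conclude $u_1 \le u_0 := \psi_r$... here one must be slightly careful that $u_0$ is defined as $\psi_r$ itself; and (ii) the degenerate possibility that the level set at distance $d$ contains $s$, in which case the maximiser might be $s$ and one needs to walk from $s$ along a decreasing chain to a node genuinely in the bulk. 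Both issues are handled by invoking the global extremality of $\psi_s,\psi_r$ and the no-bridge hypothesis; once these edge cases are dispatched, the induction is routine. I expect no analytic difficulty beyond this combinatorial care.
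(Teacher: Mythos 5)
Your sign structure is the reverse of the one the proposition needs, and under your convention the statement you are trying to prove is false already at $d=1$. You read Eq.~(\ref{eq:Poisson-1}) as a positive injection at $s$ and a negative one at $r$; note that this contradicts Eq.~(\ref{eqn:DeltaTheta}), from which $\hat{\vec B}\vec\psi = F_{rs}\vec\nu_{rs}$ follows directly (subtract $\vec B\vec\theta=\vec P$ from $\hat{\vec B}\hat{\vec\theta}=\vec P$): the dipole is positive at $r$ and negative at $s$, the $\vec\nu_{sr}$ in (\ref{eq:Poisson-1}) is a sign slip, and the paper's own argument uses $q_n\le 0$ for all $n\neq r$. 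The distinction is not cosmetic. With your convention the global maximum of $\psi$ would sit at $s$, which lies at distance $1$ from $r$ in the pre-contingency graph, while $u_0=\psi_r$ would be the global minimum; then $u_1\le u_0$ would force $\psi$ to be constant, so the inequality you want fails at $d=1$. Your ``treat $s$ separately / walk along a decreasing chain'' remark cannot repair this; the only repair is the correct sign ($\psi$ maximal at $r$, minimal at $s$), after which $s$ needs no special treatment in the $u_d$ argument at all, because $q_s<0$ only strengthens the subharmonicity inequality there.

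Even granting the correct signs, your key step has a genuine gap. Writing $\psi_{n^\star}$, for the maximizer on the level set $\{\mathrm{dist}_0^{\mathrm u}(n,r)=d\}$, as a convex combination of its neighbours' values, you bound the neighbours at distance $d$ by $u_d$ but say nothing about those at distance $d+1$; without knowing $u_{d+1}\le u_d$ you cannot conclude that some neighbour at distance $d-1$ carries a value $\ge u_d$ --- the large neighbour could sit at distance $d+1$. Your ``induction on $d$'' never actually invokes an induction hypothesis, so as written the step is a non sequitur. The paper closes exactly this hole by inducting \emph{backwards} from $d=d_{\rm max}$: at the farthest level set there are no neighbours at larger distance (base case), and in the step $d\to d-1$ the hypothesis $u_{d+1}\le u_d$ is what absorbs the distance-$(d+1)$ neighbours. (Equivalently, one can take the maximizer over the nested sets $\{n:\mathrm{dist}_0^{\mathrm u}(n,r)\ge d\}$ and apply a boundary maximum principle.) A smaller point: the paper works with Eq.~(\ref{eq:Poisson}) on the original topology, so the graph defining adjacency coincides with the graph defining $\mathrm{dist}_0^{\mathrm u}$; your mixed choice ($\hat{\vec B}$ for harmonicity, pre-contingency distances) can be made to work with the bookkeeping you sketch, but it buys nothing and creates the $d=1$ and $n^\star=s$ complications you then have to fight.
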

A proof is given in appendix \ref{sec:proof-decay}.
We thus find that potential changes generally decrease with the distance
in magnitude and so do the flow changes.

Furthermore, we can exploit the analogy to electrostatics to gain an insight into the scaling of flow changes with distance. As the flows are determine by a discrete Poisson equation, a discrete version of Gauss' theorem follows immediately. We note that we formulate this result in terms of the original network topology, cf.~Equation (\ref{eq:Poisson}).

\begin{lemma}
\label{lem:Gauss}
Consider the failure of a single link $(r,s)$ in a flow network and denote by $V$ be the set of
vertices in the network.
For every decomposition of the network $V = V_1 + V_2$ with $r \in V_1$ and
$s \in V_2$ we have
\be
   \sum_{m \in V_1, n \in V_2} F_{m \rightarrow n} = F_{rs} \, (1-\eta(r,s))^{-1} \,\nonumber .
\ee
That is, for each decomposition the total flow between the two parts $V_1$ and $V_2$ equals the dipole strength.
\end{lemma}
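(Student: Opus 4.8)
The plan is to recognize this as the discrete analogue of Gauss's flux theorem applied to the dipole Poisson equation (\ref{eq:Poisson}). Recall that the flow changes satisfy $\Delta F_{m\to n} = b_{mn}(\psi_m-\psi_n)$ on the \emph{original} topology, and that $\vec\psi$ solves $\vec B\,\vec\psi = \vec q$ with $\vec q = F_{rs}(1-\eta(r,s))^{-1}\,\vec\nu_{sr}$, i.e.\ the right-hand side is supported only on nodes $r$ and $s$, carrying values $-F_{rs}(1-\eta)^{-1}$ at $r$ and $+F_{rs}(1-\eta)^{-1}$ at $s$. (To keep signs straight I would fix the convention that $F_{r\to s}>0$ corresponds to the source injecting at $r$ and absorbing at $s$; the statement as written with $\vec\nu_{sr}$ just flips the labelling, and I will simply track the sign through to the end.) The key identity is that for \emph{any} vertex $m$,
\begin{equation}
  \sum_{n\sim m} \Delta F_{m\to n} = (\vec B\,\vec\psi)_m = q_m ,\nonumber
\end{equation}
which is nothing but Kirchhoff's law (\ref{eq:continuity}) for the perturbation problem.

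First I would sum this nodal balance over all $m\in V_1$:
\begin{equation}
  \sum_{m\in V_1}\sum_{n\sim m}\Delta F_{m\to n} \;=\; \sum_{m\in V_1} q_m .\nonumber
\end{equation}
On the right-hand side, since $q_m$ is nonzero only at $r$ and $s$, and by hypothesis $r\in V_1$ while $s\in V_2$, only the $r$-term survives, giving $q_r = F_{rs}(1-\eta(r,s))^{-1}$ (with the sign dictated by the convention above). On the left-hand side I would split the inner sum according to whether the neighbour $n$ lies in $V_1$ or in $V_2$. The contributions from edges internal to $V_1$ cancel pairwise: each internal edge $(m,m')$ with $m,m'\in V_1$ contributes $\Delta F_{m\to m'} + \Delta F_{m'\to m} = 0$ by antisymmetry of the flow. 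What remains is exactly $\sum_{m\in V_1,\,n\in V_2}\Delta F_{m\to n}$, the total net flow across the cut. Equating the two sides yields the claim.

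I do not expect a genuine obstacle here — the result is a one-line consequence of summing Kirchhoff's current law over a vertex subset and invoking the antisymmetry $\Delta F_{m\to n} = -\Delta F_{n\to m}$. The only points requiring care are bookkeeping ones: making sure the orientation conventions in $\vec\nu_{rs}$ versus $\vec\nu_{sr}$ are handled so that the final sign matches the statement, and noting that the telescoping/cancellation of internal edges works regardless of how the (arbitrary) reference orientation of each edge was chosen, because $\Delta F_{m\to n}$ is defined directly via the potential drop $b_{mn}(\psi_m-\psi_n)$ and is intrinsically antisymmetric. One should also observe that the statement holds for \emph{every} admissible decomposition precisely because the source $\vec q$ is a pure dipole on $\{r,s\}$, so the enclosed "charge" in $V_1$ is always the single value $q_r$, independent of which other vertices are assigned to $V_1$; this is the content of the final sentence of the lemma.
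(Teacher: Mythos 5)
Your proof is correct and is exactly the argument the paper has in mind: it states that the lemma ``follows immediately'' as a discrete Gauss theorem from the Poisson equation (\ref{eq:Poisson}) and gives no further details, and your summation of the nodal balance over $V_1$ with pairwise cancellation of internal edges by antisymmetry supplies precisely that step. Your remark about sign bookkeeping is apt, since the paper's own $\vec\nu_{sr}$ versus $\vec\nu_{rs}$ convention in Eqs.~(\ref{eq:Poisson-1}) and (\ref{eqn:DeltaTheta}) is not fully consistent, but this does not affect the substance of the lemma.
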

For general network topologies this lemma implies that the average flow will decay with distance from the failing link $(r,s)$: Choose $V_1$ to include all nodes which are closer to $r$ than to $s$ and have a distance to $r$ smaller than a given value
$$
   V_1 = \{ n \in V | \mbox{dist}_0^\mathrm{u}(r,n) \le d \; ; \mbox{dist}_0^\mathrm{u}(r,n) \le \mbox{dist}_0^\mathrm{u}(s,n) \}.
$$
With increasing value of $d$ the number of nodes in $V_1$ increases and typically the number of edges between $V_1$ and $V_2$ increases, too. The total flow over these links remains constant according to lemma \ref{lem:Gauss}, such that the average flow will generally decrease. The exact scaling of the number of edge between $V_1$ and $V_2$ of course depends on the topology of the network.

\begin{figure*}[tb]
\begin{center}
\includegraphics[width=\textwidth]{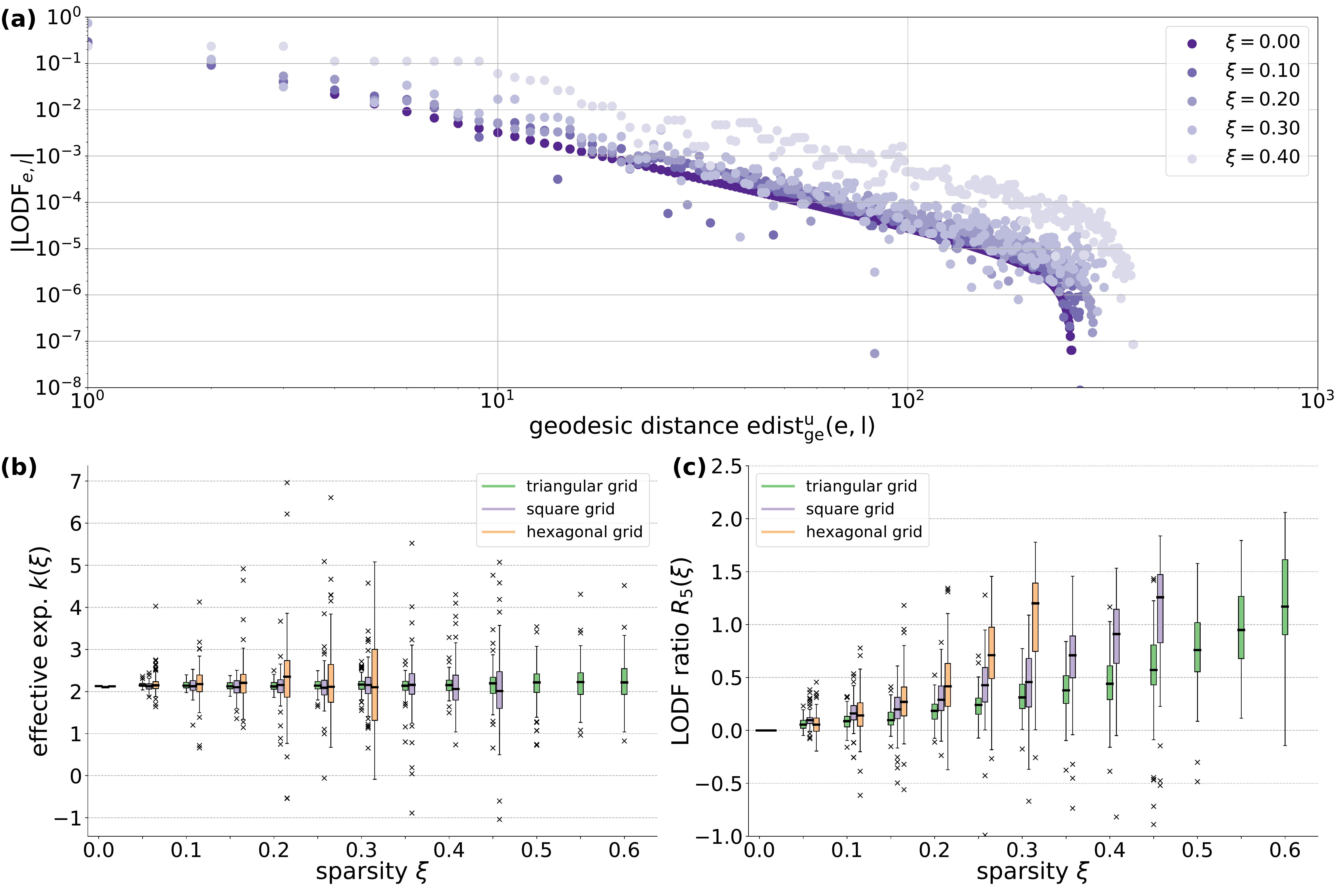}
\end{center}
\caption{ \label{fig:bp-exponents} Increasing sparsity leads to more long-ranged effects of link failures in regular grids.(a) Exemplary scaling of LODFs in a square grid of size $500\times 500$ with increasing sparsity (colors from dark to light purple), now achieved through the removal of edges not contained in an arbitrary spanning tree. (b) Whereas the effective exponent shows no change and thus still obeys approximately the inverse-square law for all topologies, (c) the logarithmic ratio between LODFs with and without sparsity at a certain distance increases on average with increasing sparsitiy. Boxplots are shown for $100$ realizations of the hexagonal grid (left, orange), square grid (purple, center) and triangular grid (green, left) choosing a random spanning tree as the basis for edge removal for each realization and value of sparsity.}
\end{figure*} 

One can furthermore show that a sufficient connectivity is needed for perturbations to spread. Generally, flow can be rerouted via an edge $(m,n)$ only if it can enter and leave the link via two independent paths. One can thus prove the following statement \cite{17lodf,Guo17}.

\begin{prop}
\label{prop:decoupling}
The line outage distribution factor $\mbox{LODF}_{e,\ell}$ between two edges $e = (m,n)$ and $\ell = (r,s)$ vanishes if there are less than two independent paths between the vertex sets $\{r,s\}$ and $\{m,n\}$.
\end{prop}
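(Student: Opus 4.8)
\emph{Proof proposal.} The plan is to reduce the statement to a claim about the dipole potential $\vec\psi$ of the discrete Poisson equation~(\ref{eq:Poisson}), and then combine flow conservation with the fact that the kernel of a connected graph Laplacian consists only of constant vectors. Since, by Equation~(\ref{eqn:lodf1}), $\mbox{LODF}_{(mn),(rs)}=b_{mn}\vec\nu_{mn}^\top\vec B^\dagger\vec\nu_{rs}/(1-\eta(r,s))$ and the denominator is strictly positive whenever $(r,s)$ is not a bridge (Proposition~\ref{prop:eta-bounds} together with the remark that $1-\eta(r,s)=0$ only for bridges), it suffices to show that the numerator $b_{mn}\vec\nu_{mn}^\top\vec B^\dagger\vec\nu_{rs}$ vanishes, equivalently that $\psi_m=\psi_n$ for the potential $\vec\psi$ solving $\vec B\vec\psi=\vec q$ with $\vec q$ supported on $\{r,s\}$; then $\Delta F_{m\to n}=b_{mn}(\psi_m-\psi_n)=0$ by Equation~(\ref{eqn:DeltaF}).

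Next I would translate the connectivity hypothesis using Menger's theorem. Having fewer than two independent paths between the vertex sets $\{r,s\}$ and $\{m,n\}$ means these sets either lie in different components — impossible here, since the network stays connected after removing the non-bridge $(r,s)$, so this case is vacuous — or they are separated by a single cut vertex $v$ (and an edge separation likewise reduces to a vertex one, by retaining the endpoint of the separating edge on the $\{m,n\}$ side; $v$ may coincide with one of $r,s,m,n$). Let $C$ be the connected component of $G\setminus\{v\}$ that contains $m$, hence also $n$ since $(m,n)$ is an edge (the case $v\in\{m,n\}$ is handled analogously below). Because $v$ separates the two sets, neither $r$ nor $s$ lies in $C$, so the source $\vec q$ vanishes on $C$; moreover every edge of $G$ leaving $C$ is incident to $v$, since $C$ is a component of $G\setminus\{v\}$.

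The heart of the argument is a discrete Gauss identity. Summing Kirchhoff's law $(\vec B\vec\psi)_n=q_n=0$ over all $n\in C$, every edge internal to $C$ cancels and one is left with the total flow change leaving $C$, which therefore vanishes; as all those edges end at $v$, the net flux from $C$ into $v$ is zero — this is Lemma~\ref{lem:Gauss} applied to a region enclosing no dipole charge. Now consider the subgraph induced on $C\cup\{v\}$, with Laplacian $\bar{\vec B}$. For every $n\in C$ all neighbours of $n$ lie in $C\cup\{v\}$, so $(\bar{\vec B}\,\vec\psi)_n=(\vec B\vec\psi)_n=0$; and $(\bar{\vec B}\,\vec\psi)_v$ equals minus the net flux from $C$ into $v$, which we just showed is zero. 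Hence the restriction of $\vec\psi$ to $C\cup\{v\}$ lies in the kernel of $\bar{\vec B}$, and since $C\cup\{v\}$ is connected this restriction is constant. In particular $\psi_m=\psi_n$, so $\Delta F_{m\to n}=0$ and $\mbox{LODF}_{e,\ell}=0$.

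The conceptual steps above are short; the main work I expect lies in the bookkeeping for the degenerate positions of the cut vertex $v$. When $v\in\{r,s\}$, say $v=r$, one first checks $s\notin C$ because a path from $s$ to $\{m,n\}$ avoiding $r$ would contradict separation, and then the argument proceeds verbatim with the region $C\cup\{r\}$; when $v\in\{m,n\}$, say $v=m$, one instead takes $C$ to be the component of $G\setminus\{m\}$ containing $n$ and concludes $\psi_n=\psi_m$ directly. As an independent consistency check one may re-derive the result combinatorially for unweighted graphs from Shapiro's Lemma~\ref{lem:electricallemma} (and its weighted generalization~\cite{Guo17}): no spanning tree can contain a path from $r$ to $s$ that traverses the edge $(m,n)$, since such a path would have to cross the cut vertex $v$ twice, so $\mathcal N(r,m\to n,s)=\mathcal N(r,n\to m,s)=0$.
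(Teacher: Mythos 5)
The paper does not actually prove Proposition~\ref{prop:decoupling}: it only states it with a citation to Refs.~\cite{17lodf,Guo17}, where the result follows from spanning-tree representations of the flow redistribution --- essentially the route you relegate to a consistency check via Lemma~\ref{lem:electricallemma}. Your primary argument is therefore an independent, and arguably more elementary, proof, and it is correct: you reduce $\mathrm{LODF}_{e,\ell}=0$ to $\psi_m=\psi_n$ for a potential solving $\vec B\vec\psi=\vec q$ with $\vec q$ supported on $\{r,s\}$ (cleanest with the unit dipole $\vec q=\vec\nu_{rs}$, so that you directly show the PTDF numerator vanishes rather than relying on $F_{rs}\neq 0$), use the zero-enclosed-charge version of Lemma~\ref{lem:Gauss} to show that the net flux from the component $C$ of $G\setminus\{v\}$ containing $m,n$ into the separating vertex $v$ vanishes, and then observe that $\vec\psi$ restricted to the connected induced subgraph on $C\cup\{v\}$ lies in the kernel of that subgraph's Laplacian and is hence constant, giving $\psi_m=\psi_n$; your handling of the degenerate positions of $v$ (including $v\in\{r,s\}$, where the flux argument never uses the equation at $v$) goes through. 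What this buys is a weight-independent potential-theoretic proof with no matrix-tree machinery, using only tools already in the paper. Two points to tighten: (i) with Definition~\ref{def:path}, ``independent'' means edge-disjoint, so Menger's theorem yields a single separating \emph{edge}; that edge is necessarily a bridge of the intact graph, hence distinct from the non-bridge $(r,s)$ and also from $(m,n)$, and the hypothesis forces $\{r,s\}\cap\{m,n\}=\emptyset$ --- with these remarks your passage to the separating vertex (the endpoint on the $\{m,n\}$ side) is valid as stated; (ii) the strict positivity of $1-\eta(r,s)$ for non-bridges is obtained more directly from Proposition~\ref{prop:eta-upperbound}, which gives $\eta(r,s)<1$ whenever ${\rm dist}_1^\mathrm{w}(r,s)<\infty$, than from Proposition~\ref{prop:eta-bounds}, which only bounds $\eta(r,s)\le 1$.
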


\subsection{Impact of network topology}

 Now that we derived rigorous results on the scaling of LODFs, we want to study the influence of network connectivity on the scaling in more detail.
 
 To do so, we first compare the scaling obtained for the square grid to the one in the other two regular tilings of two-dimensional space, namely the hexagonal grid and the triangular grid. In perfect realizations of these grids, each node has a degree of $\operatorname{deg}_{\text{hex}}=3$ and $\operatorname{deg}_{\text{tri}}=6$, respectively, whereas the degree for the square grid reads $\operatorname{deg}_{\text{sg}}=4$. In Figure~\ref{fig:sparse-lattice-dist}(d), the LODFs are evaluated for these three topologies with increasing geodesic distance from the failing edge located again in the center of the networks between the nodes at $(x_r,y_r)=(0,0)$ and $(x_s,y_s)=(1,0)$. The quadratic scaling with the geodesic distance in $x$-direction $\|x\|^{-2}$ (black, dotted line) is preserved for all three topologies, i.e. the triangular grid (green triangles, bottom), the square grid (red squares, center) and the hexagonal grid (blue hexagons, top). The grids used here were of size $1000\times 1000$ and $1001\times 500$ nodes for the square grid and the triangular grid, respectively, and $150\times 150$ hexagons for the hexagonal grid.
 
 Thus, the quadratic scaling is robust throughout different regular networks. However, real networks are in general not regular. For this reason, we proceed by studying the effect of increasing sparsity in these regular tilings. Define the sparsity $\xi \in[0,1]\subset \mathbb{R}$ as the \textit{fraction of edges removed from the original graph}. We make use of two different methods to achieve increasing sparsity. Our first method is a completely random removal of edges in the graph followed by measuring the LODFs along a specified path. If an edge along the path does no longer exist, we simply skip the edge. The results obtained from this method are shown in Figure~\ref{fig:sparse-lattice-dist} (a--c). There is no change visible in the scaling of LODFs, except for the direction perpendicular to the dipole in panel (c). In particular, only small values of sparsity $\xi$ can be studied using this method, since a random removal of edges may easily result in disconnected graphs. For this reason, we make use of another method. %
 
 For the second method, we first construct an arbitrary spanning tree of the network after removal of the failing edge. Then, we subsequently remove random edges from the graph that are not part of the tree until a fraction $\xi$ of its original edges is removed from the graph. This way, we make sure that the whole graph stays connected at all times. We continue by constructing the shortest path from the failing edge $((0,0),(1,0))$ to the node located at $(x_{\rm max},0)$ and quantify the LODFs along this path. Note that using this method to make a graph sparser, we need to take into account the graph-specific maximal sparsity $\xi_{\text{max},G}$, i.e. the fraction of edges whose removal would disconnect the graph. Assuming the initial tree to be minimal, this fraction may be calculated as $\xi_{\text{max,hex}}=1/3$, $\xi_{\text{max,sg}}=1/2$ and  $\xi_{\text{max,tri}}=2/3$ for the hexagonal grid, square grid and triangular grid, respectively. 
 
 Using this procedure, we can quantify the scaling of LODFs in grids with increasing sparsity. The direct assessment of a scaling exponent is difficult for sparser graphs due to the large spread in LODF values, see Figure~\ref{fig:bp-exponents}(a). This is why we construct a different measure to quantify this scaling. We consider the \textit{effective exponent} $k(\xi)$, where $\xi$ is the graph's sparsity, and assume a scaling of the form 
 $$|\text{LODF}(r,\xi)|\varpropto r^{-k(\xi)}$$ 
 in some region of the geodesic distance $r=\| \vec r \|$ from the link failure. This effective exponent is calculated as follows 
 \begin{align*}
     k(\xi)=-\log_{5}\left(\frac{\sum_{r\in [5 \cdot 10^1-w,5 \cdot 10^1+w]}|\text{LODF}(r,\xi)|}{\sum_{r\in [10^1-w, 10^1+w]}|\text{LODF}(r,\xi)|}\right),
 \end{align*}
where $w\in\mathbb{N}$ is a window specifying the range to average over in order to smooth the LODF values considered. We chose a windows size of $w=2$ when calculating the effective exponent in practice which we found to result in a good compromise between smoothing and completely removing the trend. However, we did not observe a strong effect of the window size on the results. For a perfect inverse square law $|\text{LODF}|\varpropto \| \vec r \|^{-2}$ and a vanishing window $w=0$, this parameter yields $k=-\log_{5}(5^{-2})=2$ as required. In Figure~\ref{fig:bp-exponents}(b), it can be observed that this effective exponent stays approximately constant at $k\approx 2$ over different values of sparsity and the three different topologies considered, where results for each value of sparsity were obtained using $100$ random realizations of edge removals and with the same grid sizes as stated previously.

To further quantify the effect of increasing sparsity in regular networks, we make use of another measure which we refer to as the \textit{LODF ratio} $R_w(\xi)$. It is simply calculated as the logarithmic ratio between the LODFs with and without sparsity, again averaged over a fixed window of distances
 \begin{align*}
 R_w(\xi)=\log_{10}\left(\frac{\sum_{r\in [10^1-w,10^1+w]}|\text{LODF}(r,\xi)|}{\sum_{r\in [10^1-w, 10^1+w]}|\text{LODF}(r,\xi=0)|}\right).
 \end{align*}
 Note that we evaluate this parameter at a distance of $10^1$ but we found the parameter to yield similar values for all distances considered. A parameter of $R_w(\xi)=1$ then represents a tenfold increase in the LODFs as compared to the network without any edges removed. In Figure~\ref{fig:bp-exponents}(c), this parametzer is shown for the different topologies and sparsities. Here, a window size of $w=5$ was used. An increase with increasing sparsity is clearly visible. In particular, the LODFs increase on average more than tenfold close to the highest possible values of sparsity.
 
In total, we observe that the scaling exponent derived from the dipole analogy in section~\ref{sec:dipole} holds for the regular networks even when removing a large fraction of their edges. On the other hand, the LODF values at a certain distance from the failing link show an increase with increasing sparsity, such that the actual effect of a link failure can be up to tenfold stronger than for the corresponding regular grid with no links removed. Thus, the overall effect of a link failure is more long-ranged in a sparser network, although no change in the effective exponent can be observed.

\subsection{Scaling with distance}
\label{sec:decay-num}

\begin{figure}[tb]
\begin{center}
\includegraphics[width=6cm]{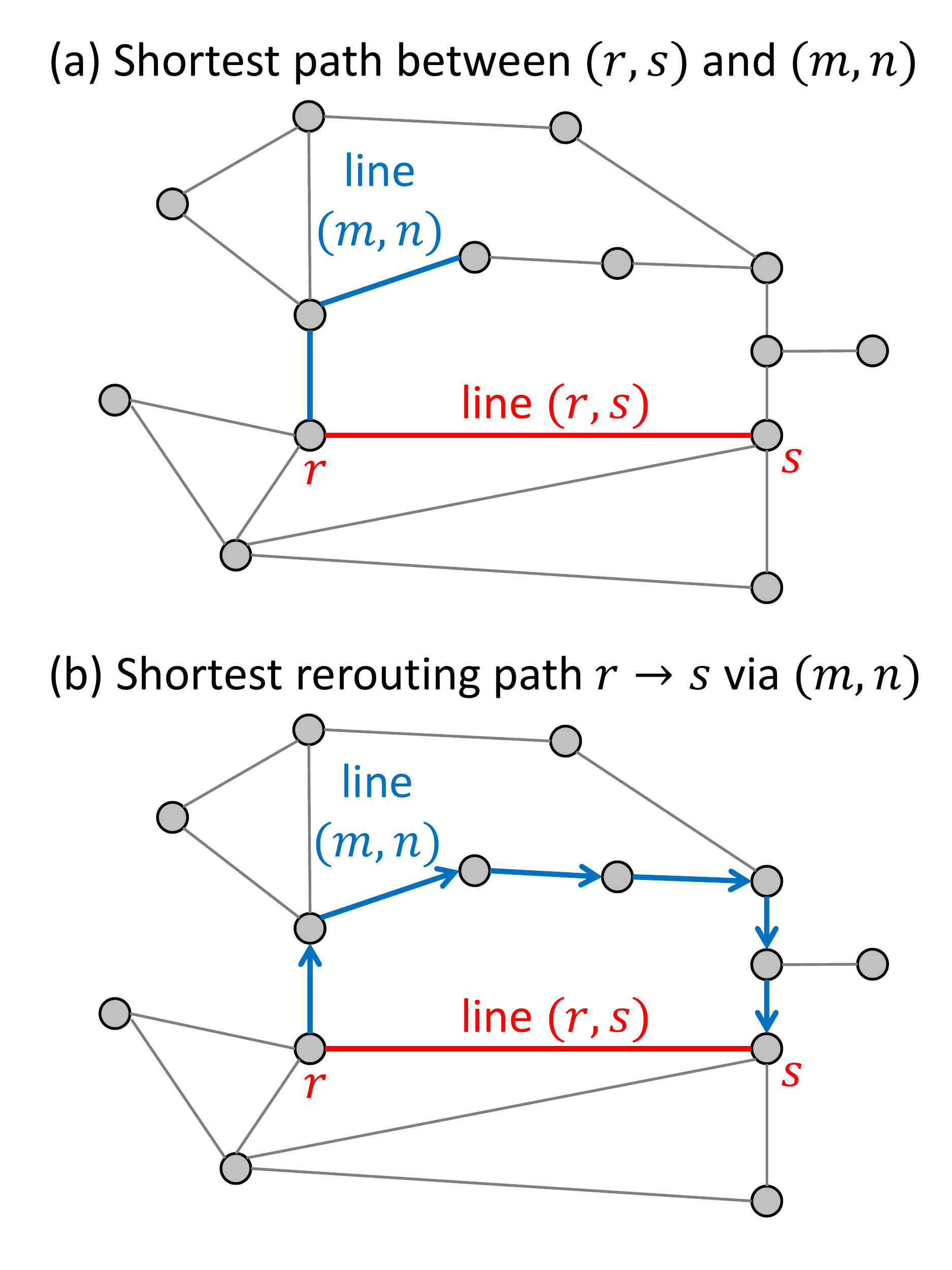}
\end{center}
\caption{
\label{fig:def_rerouting}
Illustration of two different distance between two links $(r,s)$ and $(m,n)$ (coloured in red).
(a) The common geodesic or shortest-path distance (indicated by thick lines).
(b) The rerouting distance is defined as the length of the shortest path from $r$ to $s$ crossing
the link $(m,n)$ and is indicated by thick arrows.
The sample network in this figure is based on the topology of the IEEE 14-bus test grid \cite{powertest}.
}
\end{figure}

\begin{figure*}[tb]
\begin{center}
\includegraphics[width=\textwidth]{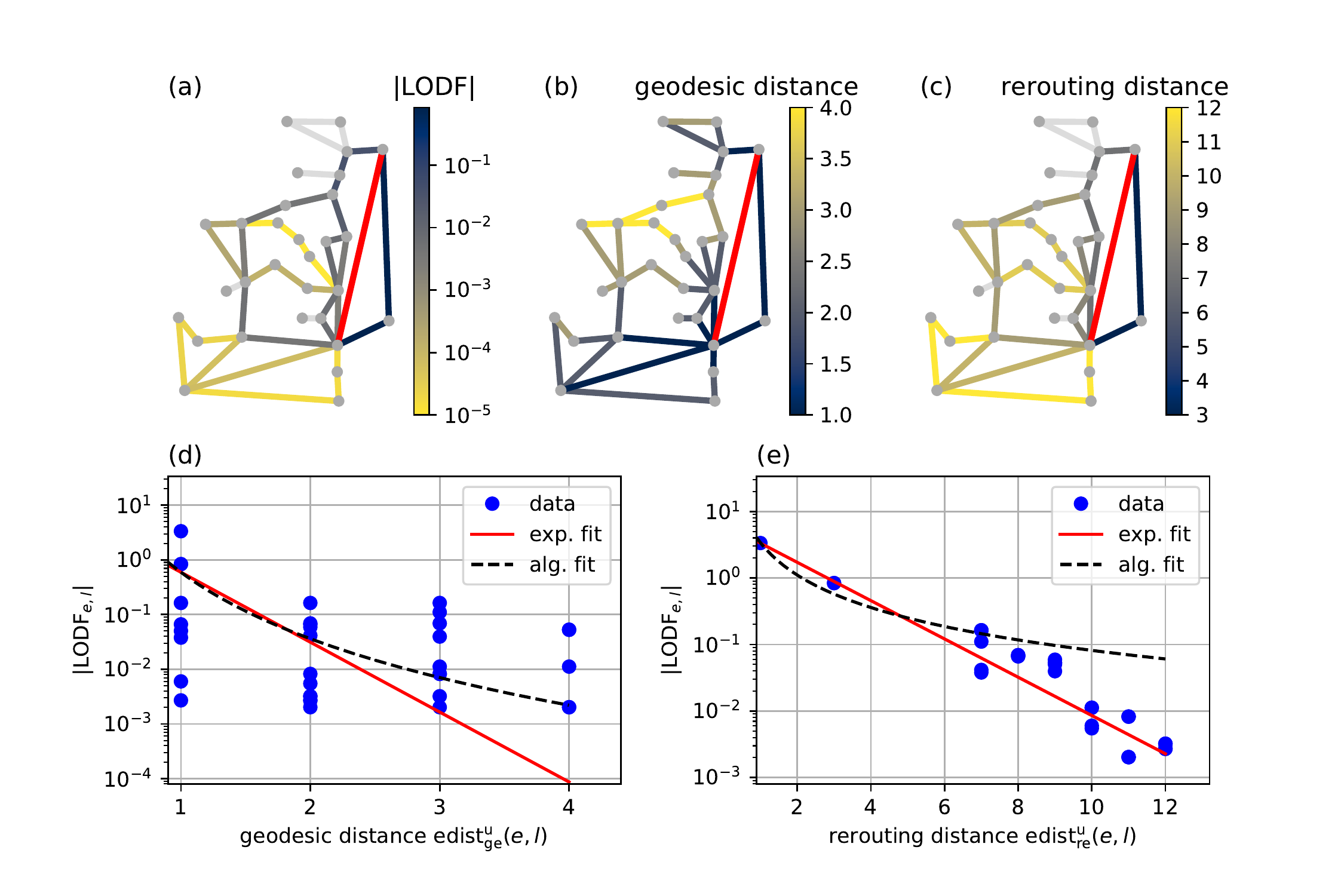}
\end{center}
\caption{
\label{fig:lodf-vs-dist-30}
Line outage distribution factors (LODFs) in comparison to the unweighted geodesic and rerouting distances.
(a) Magnitude of the LODFs in the IEEE 30-bus test grid 'case30' \cite{powertest}. The failing link $l$ is marked in red.
(b) The geodesic distance to the failing link $\text{edist}_\text{ge}^\mathrm{u}$.
(c) The rerouting distance to the failing link $\text{edist}_\text{re}^\mathrm{u}$.
(d,e) LODFs vs. geodesic and rerouting distance (blue dots) including an exponential (red solid line) and an algebraic (black dashed line) least-squares fit to the data. Due to the low number of data points, a clear decision whether the correlation is algebraic or exponential is not possible.
}
\end{figure*}

The impact of a link failure generally decays with distances. While the definition of distance is straightforward in regular lattices, different measures are meaningful in networks with complex topologies. The geodesic distance of two links follows from definition \ref{def:path} for two vertices
\begin{align*}
       &{\rm edist}^\mathrm{w}_{\text{ge}}[(r,s),(m,n)] \\
       &= \min_{v_1\in\{r,s\} , v_2 \in \{m,n\} } {\rm dist}_0^\mathrm{w}(v_1,v_2)+\frac{w_{rs}+w_{mn}}{2}. 
\end{align*}
Here, $w_{rs}$ is the edge weight assigned to the edge $(r,s)$. When considering the unweighted analog, the edge distance is defined analogously setting all edge weights to one. The additional term $\frac{w_{rs}+w_{mn}}{2}$ ensures that neighboring edges have non-zero distance, e.g. unity distance $\rm{edist}^{\rm u}_{\rm ge}=1$ in the unweighted case. However, this distance is a bad indicator for flow rerouting in real-world irregular topologies. An example shown in Figure~\ref{fig:lodf-vs-dist-30} demonstrates that this simple distance is only weakly correlated with the magnitude of the LODFs for a real-world power grid test case.

Instead, we need a distance measure based on flow rerouting. If a link $(r,s)$ fails, the flow must be rerouted through other pathways, as described by the electrical lemma \ref{lem:electricallemma}. However, it is not feasible to take into account all spanning trees which govern the flow rerouting. In order to still be able to estimate the impact on another link $(m,n)$, we will thus consider a path from $r$ to $s$ that crosses this link. The main difference to the ordinary graph theoretical distance is that we have to take into account a path \emph{back and forth}. We are thus led to the following definition.

\begin{defn}
A \emph{rerouting path} from vertex $r$ to vertex $s$ via the edge $(m,n)$ is a path
\be
    (v_0 = r, v_1, \ldots, v_i = m, v_{i+1} = n, v_{i+2}, \ldots, v_k=s)\nonumber
\ee
or
\be
    (v_0 = r, v_1, \ldots, v_i = n, v_{i+1} = m, v_{i+2}, \ldots, v_k=s)\nonumber
\ee
where no vertex is visited twice. The \emph{rerouting distance} between two edges $(r,s)$ and $(m,n)$ denoted by ${\rm edist}^\mathrm{u/w}_{\text{re}}[(r,s),(m,n)]$ is the length of the shortest rerouting path from $r$ to $s$ via $(m,n)$ plus the length of edge $(r,s)$. Equivalently, it is the length of the shortest cycle crossing both edges $(r,s)$ and $(m,n)$. If no such path exists, the rerouting distance is defined to be $\infty$. \label{def-reroute-dist}
\end{defn}
The definition of a rerouting path is illustrated in Figure \ref{fig:def_rerouting}. Again, we consider a weighted and an unweighted version of this distance indicated by the superscript $^w$ and $^u$, respectively. We note that the length of the edge $(r,s)$ is included in order to make the distance measure symmetric. In Appendix \ref{app:redistance}, we show explicitly that this definition satisfies the axioms of a metric and discuss how to compute the shortest rerouting path.

An example of rerouting distances in comparison to the LODFs is shown in Figure~\ref{fig:lodf-vs-dist-30} for a small test grid. We observe a much better correlation in comparison to the ordinary geodesic distance defined above. The limitation of geodesic distances becomes especially clear for situations described by proposition \ref{prop:decoupling}. If exactly one independent path exists between two  links, the rerouting distance is $\infty$, while the geodesic distance is finite. Hence, the latter fails to explain why the LODF between the two links vanishes.

To further investigate the importance of distance, we simulate all possible link failures in four test grids of different size. For every failing link $(r,s)$ we evaluate the geodesic distance as well the rerouting distance to all other links in the grid. To quantify to which extend the distance predicts the magnitude of the LODFs, we then calculate the Kendall rank correlation coefficient $\tau$ \cite{Kendall1948}. This coefficient is used on ordinal data and assumes values in the interval $[-1,1]$. A value of (minus) one indicates perfect (anti)correlation, whereas a zero value implies no correlation between the data. Table \ref{tab:rankcor-dist-lodf} shows the results, averaging over all trigger links $(r,s)$ in the respective grid discarding bridges. The rank correlation is negative as LODFs generally decay with  distance. The magnitude of the rank correlation is significantly higher for the rerouting distance. In particular for the test grid `case1354pegase' we see that the ordinary geodesic distance has a very limited predictive power for the LODFs ($|\tau| < 0.25$), while the rerouting distance is strongly correlated to the magnitude of the LODFs $|\tau| > 0.83$. 
Figure~\ref{fig:kendalltau_histograms} illustrates this discrepancy in the distribution of $\tau$ values for the different distance measures for the test grids 'case118' and 'case1354pegase'.
We are thus led to the conclusion that geodesic distances are of limited interest when considering the impact of link failures and should be replaced by other measures such as rerouting distances. Notably, we observe no major difference when comparing weighted and unweighted distances.

\begin{figure}[tb]
\begin{center}
\includegraphics[width=\columnwidth]{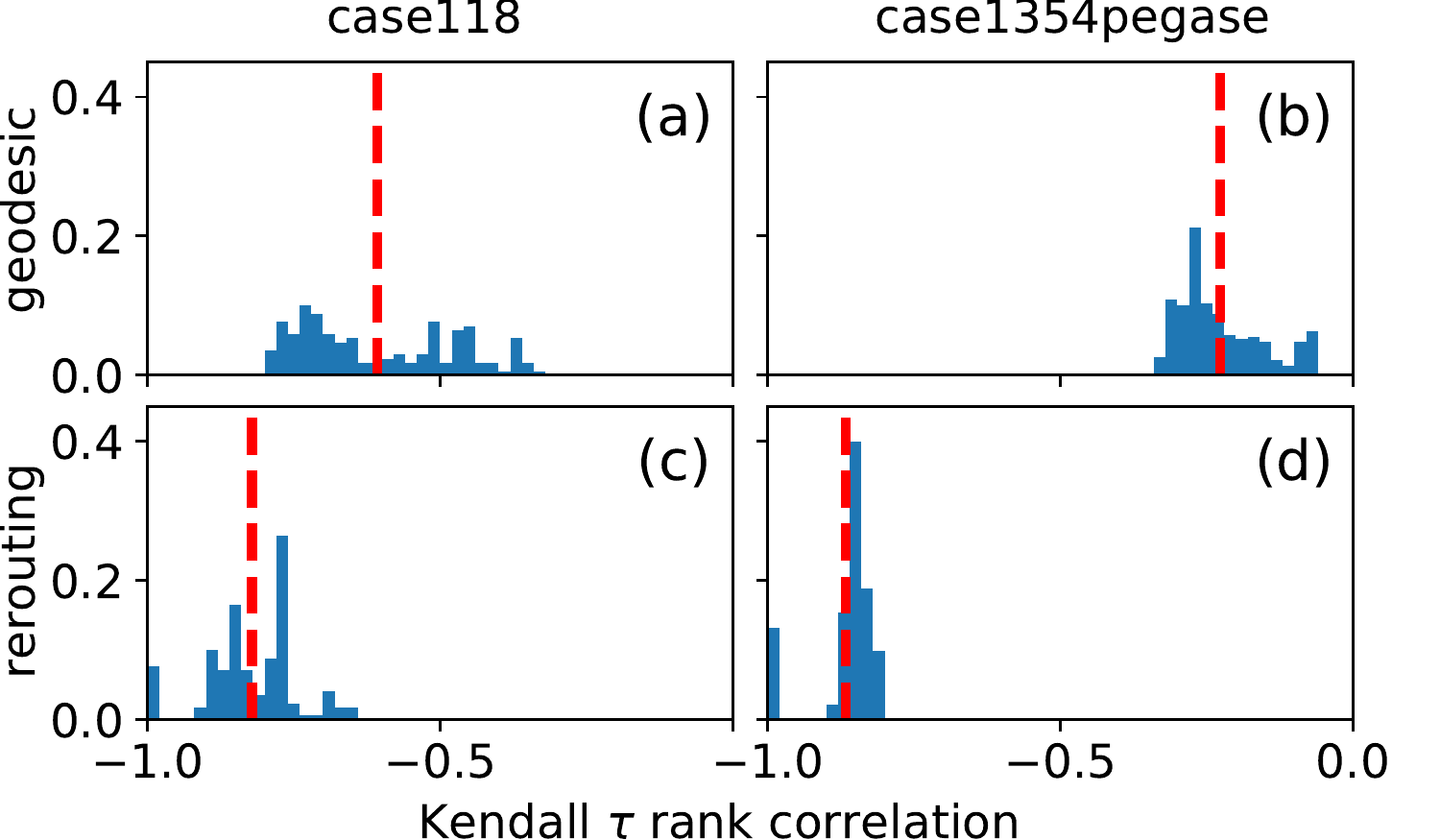}
\end{center}
\caption{
\label{fig:kendalltau_histograms}
(a,b, top) Normalized histograms of the Kendall $\tau$ rank correlation for the magnitude of LODF and unweighted geodesic distance and (c,d, bottom) $|\text{LODF}|$ and unweighted rerouting distance between two links in the IEEE test cases 118 (a,c) and 1354pegase (b,d). Vertical red lines show the average of the distribution of tau values (cf.\ table~\ref{tab:rankcor-dist-lodf}). The stronger correlation of the rerouting distance with the LODFs as compared to the geodesic distance is clearly visible.
}
\end{figure}

\section{Conclusion}
\label{sec:conclusion}

Link failures represent major threats to the operation of complex supply networks across disciplines. In this article, we examined the impact of such failures in terms of the induced flow changes, which is commonly referred to as Line Outage Distribution Factors (LODFs). We provide mathematically rigorous results and extensive numerical simulations with a focus on the gross network response (i.e. the dipole strength), the scaling of flow changes with distance and the role of network topology. These quantities are crucial to understand the global robustness of supply networks as each failure can trigger a cascade of secondary failures with potentially catastrophic consequences.  

First, we demonstrated rigorously that the flow changes created by a single failure in a square lattice corresponds to the field of an electromagnetic dipole. Hence the effects of a failure decay with the distance following an inverse-square law. The dipole analogy developed here allows for an analytical expression describing the spreading of link failures. Although this treatment is rigorously valid only in the continuum limit, we showed that the observed scaling extends to the other regular tilings of two-dimensional space even after removing a fraction of links. Thus, we conclude that the scaling may be expected to hold also for realistic topologies.

\begin{table}[htb!]
\begin{tabular}{| l | c | c | c | c |}
\hline
  test grid & \multicolumn{4}{|c|}{Rank correlation $\tau$ $|\text{LODF}|$ vs. distance} \\
\hline
    & \multicolumn{2}{|c|}{geodesic distance} & \multicolumn{2}{|c|}{rerouting distance} \\
    & unw. & weighted  & unw. & weighted  \\
\hline
  case30 & -0.4027 & -0.4015 & -0.8528 & -0.8440 \\
  case118 & -0.6069 & -0.5233 & -0.8211 & -0.7920 \\
  case1354pegase & -0.2269 & -0.1341 & -0.8664 & -0.8438 \\
  case2383wp & -0.3604 & -0.2318 & -0.7213 & -0.6066 \\
\hline
\end{tabular}
\caption{
\label{tab:rankcor-dist-lodf}
Average of the Kendall $\tau$ rank correlation values for magnitude of LODF versus different distance measures. The four different IEEE test cases consistently show a higher degree of correlation between rerouting distances and LODF than between geodesic distances and LODF in both weighted and unweighted cases, while the unweighted rerouting distance slightly outperforms the weighted one. For examplary distributions of the $\tau$ values cf.\ figure~\ref{fig:kendalltau_histograms}.
}
\end{table}

Increasing the sparsity of a network promotes more long-ranged effects up to the point when two links are only related by one independent pathway. Then, a rerouting between the two links becomes impossible and a failure of one link does not affect the other. However, this also implies a lack of redundancy such that a link failure can have catastrophic consequences locally.

In real-world irregular networks, the gross response of a failure depends on the loading of the link as well as the local network structure. Rigorous upper and lower bounds were given for the dipole strength relating it to the redundancy of the failing link. Furthermore, the common notion of a geodesic graph distance is of limited use to predict flow rerouting. We thus introduced a \textit{rerouting distance} which we showed to be much more meaningful to predict the impact of failures.

Whereas the classical analysis of link failures relies heavily on simulation results, our results provide heuristic methods and rigorous bounds which allow for an analytical insight into the relationship between the structure of a network and its robustness towards link failures. In particular for large networks where simulations are difficult, our results allow for an a priori analysis of link failures and might also be used to identify critical links, for instance in terms of the locality factor which quantifies the response of a network to a single failure. This type of analysis is aided by the general results on decay of maximal flow changes with geodesic and rerouting distances.

We expect our results to be applicable far beyond power grids since the linearized treatment extends to other phenomena such as hydraulic or biological networks. The rerouting distance along with the bounds on the locality factor may greatly simplify the study of link failures in all kinds of supply networks and makes them more accessible. We expect our results on the scaling of LODFs for networks with increasing sparsity along with this distance measure to help identifying critical parts and paths and improving the overall robustness of supply networks.

\acknowledgments

We gratefully acknowledge support from the German Federal Ministry of Education and Research (BMBF grant no. 03SF0472) and the Helmholtz Association (via the joint initiative ``Energy System 2050 -- a contribution of the research field energy'' and the grant no. VH-NG-1025 to D.W.).

\appendix

\section{Proof of Proposition \ref{prop:eta-bounds}}
\label{sec:proof-eta-bounds}

\begin{proof}
By definition, $\eta(r,s)$ is given by the flow $F_{rs}/\Delta P$ when the power $\Delta P$
is injected at node $r$ and withdrawn at node $s$, while there is no injection at any other node,
\begin{align}
    & \sum \nolimits_n F_{r \rightarrow n} =  \sum \nolimits_n F_{n \rightarrow s} = \Delta P  \nn \\
    & \sum \nolimits_n F_{m \rightarrow n} = 0 \qquad \forall \, m \neq r,s,
\end{align}
such that
\be
    \eta(r,s) = \frac{F_{r \rightarrow s}}{ \sum \nolimits_n F_{r \rightarrow n}} .
    \label{eq:eta-F-direct-total}
\ee
Using the basic relation $F_{r \rightarrow s} = b_{rs} (\theta_r - \theta_s)$ we can thus express the
inverse of $\eta(r,s)$ as
\be
   \frac{b_{rs}}{\eta(r,s)} =  \sum \nolimits_n F_{rn}
   \quad \mbox{if} \quad
   \theta_r - \theta_s = F_{rs}/b_{rs} = 1.
\ee
We can now use that the potential drop over all other links in the network is smaller
than for the link $(r,s)$
\be
   |\theta_m - \theta_n| \le \theta_r - \theta_s,
\ee
see proposition \ref{prop:decay}. If $\theta_r - \theta_s = 1$ we thus know that
\be
   |F_{mn}| = |b_{mn} (\theta_m - \theta_n)| \le b_{mn}.
\ee
We thus obtain
\begin{align}
    & \frac{b_{rs}}{\eta(r,s)} =  \sum \nolimits_n F_{rn} \\
    & \mbox{such that} \;
       \theta_r - \theta_s = 1 \nn \\
    & \qquad \qquad F_{mn} = b_{mn} (\theta_m - \theta_n) \nn \\
    & \qquad \qquad |F_{mn}| \le b_{mn}  \; \forall \, \mbox{edges} \, (m,n) \nn \\
    & \qquad  \qquad \sum_{n=1}^N F_{mn} = 0 \; \forall \, m \neq r,s
\end{align}
Comparing to the expression (\ref{eq:maxflow}) for $\lambda_F(r,s)$ we see that two additional
constraints have to be satisfied. Additional constraints can only decrease the flow-value with respect
to the maximum in Equation (\ref{eq:maxflow}) such that we have
\begin{align}
    \frac{b_{rs}}{\eta(r,s)} \le \lambda_F(r,s)  \nn \\
    \Rightarrow \eta(r,s) \ge \frac{b_{rs}}{\lambda_F(r,s)} \, .
\end{align}

\end{proof}

\section{Proof of proposition \ref{prop:eta-upperbound}}
\label{sec:proof-eta-upperbound}

\begin{proof}
Consider first a reduced network consisting only of the link $(r,s)$ and the shortest alternative
path between the two nodes, which we denote as $(j_1=r, j_2, j_3, \cdots, j_n = s)$. Fixing
the nodal potentials such that $\psi_r - \psi_s = 1$, the direct flow over the link $(r,s)$ is given
by
\be
    F'_{r\rightarrow s} = b_{rs}
\ee
whereas the indirect flow over shortest alternative path is given by
\begin{align}
  F'_{r \rightarrow j_2} &= F'_{j_2 \rightarrow j_3} = \cdots = F'_{j_{n-1} \rightarrow j_n} \nn \\
  &= \left[ b_{j_1,j_2}^{-1} + b_{j_2,j_3}^{-1}  + \cdots + b_{j_{n-1},j_n}^{-1}  \right]^{-1} \nn \\
  &= \frac{1}{{\rm dist}_1^\mathrm{w}(r,s)}.
\end{align}
Reintroducing all edges to the grid can only increase the total flow from $r$ to $s$ such that
\be
    \sum_n F_{r \rightarrow n} \ge F'_{r\rightarrow s} + F'_{r\rightarrow j_2}
    = b_{rs} + \frac{1}{{\rm dist}_1^\mathrm{w}(r,s)} \, .
\ee
Thus we obtain (cf. Equation \ref{eq:eta-F-direct-total})
\be
    \eta(r,s) = \frac{F_{r \rightarrow s}}{\sum_n F_{r \rightarrow n}}
        \le \left[ 1 + \frac{1}{b_{rs} \times {\rm dist}_1^\mathrm{w}(r,s)} \right]^{-1}.
\ee
\end{proof}

\section{Proof of Proposition \ref{prop:decay}}
\label{sec:proof-decay}

In this appendix we first give the proof for proposition \ref{prop:decay} and then show
when the decay becomes strictly monotonous.

\begin{proof}
The proof is carried out by induction starting from $d = d_{\rm max}$. We only give
the proof for the maximum, the proof for the minimum proceeds in an analogous way.
We assume that the network is large enough such that $d_{\rm max} \ge 2$, otherwise
the statement is trivial anyway.

(1) Base case $d = d_{\rm max}$:
Consider the node $n$ of the network for which $\textrm{dist}(n,r) = d_\textrm{max}$
and $\psi_n$ assumes its maximum $\psi_n = u_{d_{\rm max}}$. By assumption
we have $\textrm{dist}(n,r) \ge 2$ such that the node $n$ cannot be adjacent to the
perturbed edge such that $q_n = 0$. The $n$-th component of Equation
(\ref{eq:Poisson}) yields
\begin{align}
   B_{nn}  \psi_n &= -  \sum \limits_{m \neq n} B_{nm} \psi_m  \nn \\
   &= -  \sum \limits_{\substack{m\neq n \\ \textrm{dist}(m,r) = d_{\rm max}}} B_{nm} \psi_m \nn \\
     & \qquad -  \sum \limits_{\substack{m \neq n \\ \textrm{dist}(m,r) = d_{\rm max}-1}} B_{nm} \psi_m  .
              \label{eqn:proof2-f_from_ A}
\end{align}
We define the abbreviations
\begin{align}
    \mathcal{B}_d =-  \sum \limits_{m \neq n, \textrm{dist}(m,r) = d} B_{nm}
\end{align}
and use some important properties of the matrix $\vec B$:
\begin{align}
   B_{nm} & \le 0  \; \textrm{for} \, n \neq m \qquad \Rightarrow \qquad \mathcal{B}_d \ge 0 \nn \\
   B_{nn} & \ge \mathcal{B}_{d_\textrm{max}} + \mathcal{B}_{d_\textrm{max}-1}    \, .
\end{align}
We can furthermore bound the values of $\psi_m$ in Equation (\ref{eqn:proof2-f_from_ A}) by
$u_{d_{\rm max}}$ or $u_{d_{\rm max}-1}$, respectively, such that we obtain
\begin{align}
   u_{d_{\rm max}} = \psi_n  & \le
       \frac{  \BB_{d_{\rm max}} u_{d_{\rm max}} + \BB_{d_{\rm max}-1} u_{d_{\rm max}-1} }{
                 \BB_{d_{\rm max}}  + \BB_{d_{\rm max}-1} } \nn \\
   \Rightarrow \; u_{d_{\rm max}} & \le u_{d_{\rm max}-1}, \nn \\
\end{align}

(2) Inductive step $d \rightarrow d-1$:
We consider the node $n$ of the network with $\mbox{dist}(n,r) = d$ and $\psi_n = u_d$.
Starting from Equation (\ref{eq:Poisson}) and using the same estimates as above,
we obtain
\begin{align}
     u_d = \psi_n  & = \frac{ q_n - \sum_{m \neq n} B_{nm} \psi_m }{B_{nn}} \nn \\
      & \le \frac{  \BB_{d-1} u_{d-1} + \BB_{d} u_{d} + \BB_{d+1} u_{d+1} }{
                 \BB_{d-1} + \BB_{d} + \BB_{d+1}} \, .\nn \\
       \label{eqn:proof2-f_from_ A2}
\end{align}
Note that the inhomogeneity $q_n \le 0$ for all nodes except for $n = r$.
With the induction hypothesis $u_{d+1} \le u_d$ this yields
\begin{align}
   u_{d}   & \le \frac{  \BB_{d-1} u_{d-1} + (\BB_{d} + \BB_{d+1}) u_{d} }{
                 \BB_{d-1} + \BB_{d} + \BB_{d+1}}  \nn \\
        \Rightarrow u_d & \le u_{d-1}.
\end{align}
which completes the proof.
\end{proof}

\section{Rerouting distance}
\label{app:redistance}

The rerouting distance introduced in definition \ref{def-reroute-dist} is a proper distance measure in the sense that it satisfies the axioms of a metric as shown in the following lemma. It can be calculated by mapping it to the two-edge disjoint shortest path problem, which can be solved by Suurballe's algorithm \cite{Suur84}. The mapping is provided by the lemma \ref{lem-2disjoint}.

\begin{lemma}
Consider an undirected graph with non-negative (all-equal) edge weights. Then the rerouting distance
${\rm edist}_{\rm re}^\textrm{w/u}[(r,s),(m,n)]$ of two edges $(r,s)$ and $(m,n)$ satisfies the following properties
\begin{enumerate}
\item
Positive definiteness: ${\rm edist}_{\rm re}^\textrm{w/u}[(r,s),(m,n)] \ge 0$.
\item
Symmetry: $${\rm edist}_{\rm re}^\textrm{w/u}[(r,s),(m,n)] = {\rm edist}_{\rm re}^\textrm{w/u}[(m,n),(r,s)]$$
\item
Triangular inequality:
\begin{align*}
 {\rm edist}_{\rm re}^\textrm{w/u}[(a,b),(r,s)]&\le {\rm edist}_{\rm re}^\textrm{w/u}[(a,b),(m,n)]\\
   & + {\rm edist}_{\rm re}^\textrm{w/u}[(m,n),(r,s)] 
\end{align*}
\end{enumerate}
both in the weighted and unweighted case.
\end{lemma}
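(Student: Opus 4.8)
The plan is to base everything on the cycle reformulation already flagged in Definition~\ref{def-reroute-dist}: for two distinct edges, ${\rm edist}_{\rm re}^{\rm w/u}[(r,s),(m,n)]$ equals the minimum length of a simple cycle containing both $(r,s)$ and $(m,n)$ (and $=\infty$ if no such cycle exists). Establishing this equivalence is the first step and is routine: appending the edge $(r,s)$ to a rerouting path from $r$ to $s$ through $(m,n)$ turns a simple $r$--$s$ path crossing $(m,n)$ into a simple cycle through both edges, and conversely deleting $(r,s)$ from such a cycle returns a rerouting path; the extra summand $w_{rs}$ in the definition is precisely what makes this correspondence length-preserving. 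Given this, \emph{positive definiteness} (claimed here only in the weak form ${\rm edist}_{\rm re}\ge 0$ --- strict positivity genuinely fails, e.g.\ ${\rm edist}_{\rm re}[e,e]=2w_e$) is immediate because all edge weights are non-negative, and \emph{symmetry} is immediate since ``shortest simple cycle through both $(r,s)$ and $(m,n)$'' is manifestly symmetric in the two edges.

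The real content is the triangle inequality ${\rm edist}_{\rm re}[(a,b),(r,s)]\le{\rm edist}_{\rm re}[(a,b),(m,n)]+{\rm edist}_{\rm re}[(m,n),(r,s)]$. Write $e=(a,b)$, $g=(m,n)$, $f=(r,s)$. If either right-hand term is $\infty$ there is nothing to prove; otherwise fix a shortest simple cycle $C_1$ through $e$ and $g$ and a shortest simple cycle $C_2$ through $g$ and $f$. The cases where $e,f,g$ are not pairwise distinct are disposed of at once (e.g.\ if $e=g$ the claim reads ${\rm edist}_{\rm re}[e,f]\le{\rm edist}_{\rm re}[e,e]+{\rm edist}_{\rm re}[e,f]$, true since the first term is $\ge0$), so assume they are distinct. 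Then $C_i=\{g\}\cup\pi_i$ with $\pi_1,\pi_2$ simple $m$--$n$ paths through $e$ and $f$ respectively, and $\ell(\pi_1)+\ell(\pi_2)=\ell(C_1)+\ell(C_2)-2w_g$. The goal is to extract from $\pi_1\cup\pi_2$ a \emph{simple} cycle through $e$ and $f$ of length at most $\ell(C_1)+\ell(C_2)$.

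The construction is path surgery. First cut out of $\pi_1$ the minimal sub-path $Q$ that still contains $e$ and both of whose endpoints $u,v$ lie on $\pi_2$: walk along $\pi_1$ away from $e$ in each direction until the first vertex of $\pi_2$ is met (which occurs no later than at $m$ and at $n$, both on $\pi_2$). By minimality the interior of $Q$ avoids $\pi_2$, and $u\ne v$ since they sit on opposite sides of $e$ along the simple path $\pi_1$. Now $u$ and $v$ split $\pi_2$ into three consecutive sub-paths, and $f$ lies on one of them. If $f$ lies on the middle sub-path (the $u$--$v$ arc of $\pi_2$), glue that arc to $Q$: since $Q$ meets $\pi_2$ only in $\{u,v\}$, the result is a simple cycle through $e$ and $f$ of length $\le\ell(\pi_1)+\ell(\pi_2)\le\ell(C_1)+\ell(C_2)$. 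If $f$ lies on one of the two outer sub-paths, glue to $Q$ \emph{both} outer sub-paths of $\pi_2$ together with the edge $g$, which joins their far endpoints $m,n$; a short disjointness check (the two outer arcs are disjoint, $Q$'s interior avoids $\pi_2$, $g$ touches only $m,n$) shows this is again a simple cycle through $e$ and $f$ (and $g$), now of length $\le\ell(\pi_1)+\ell(\pi_2)+w_g=\ell(C_1)+\ell(C_2)-w_g\le\ell(C_1)+\ell(C_2)$. Either way ${\rm edist}_{\rm re}[(a,b),(r,s)]\le\ell(C_1)+\ell(C_2)$. Nothing beyond non-negativity of the weights is used, so the weighted and unweighted versions come out together.

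I expect the \textbf{main obstacle} to be the simplicity of the spliced cycle: naively concatenating a sub-path of $\pi_1$ with a sub-path of $\pi_2$ may revisit the vertices where the two paths cross, and the point of choosing $u,v$ as the \emph{first} $\pi_2$-vertices met when leaving $e$ along $\pi_1$ is exactly to make the splice clean. A secondary subtlety is that we do not control where $f$ lands on $\pi_2$ relative to $\{u,v\}$, which forces the two-case glue-up; the $2w_g$ of slack gained by dropping $g$ from both $C_1$ and $C_2$ is precisely what pays for reinserting $g$ once in the second case. One should also double-check the degenerate situation in which $Q$ collapses to the single edge $e$ (possible when both endpoints of $e$ already lie on $\pi_2$), where the hypothesis $e\ne f$ is what prevents the ``middle arc'' from coinciding with $Q$ and degenerating the cycle. (The computational remark --- that this shortest cycle through two prescribed edges can be found by subdividing those edges and running Suurballe's algorithm for two vertex-disjoint shortest paths --- is independent of the metric axioms and can simply be cited.)
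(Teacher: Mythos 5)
Your proof is correct, and it reaches the same goal as the paper — an explicit splicing construction — but by a noticeably different route. You work throughout with the shortest-cycle characterization of ${\rm edist}_{\rm re}$, which gives you symmetry for free; the paper instead argues on rerouting paths, proving symmetry by cyclically rotating a shortest rerouting path and deriving a contradiction. For the triangle inequality the paper keeps the shortest rerouting path $p_1$ from $a$ to $b$ via $(m,n)$ intact at its two ends, closes $p_2$ into a cycle $c_2$ by adding the edge $(m,n)$, and replaces the middle of $p_1$ (between the first and last vertices of $p_1$ lying on $c_2$) by the arc of $c_2$ that contains $(r,s)$; your surgery is dual to this: you delete the shared edge $g=(m,n)$ from both optimal cycles, keep the minimal subpath $Q$ of $\pi_1$ around $e=(a,b)$ whose endpoints lie on $\pi_2$, and complete it with arcs of $\pi_2$, re-inserting $g$ in one of your two cases. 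Both constructions give exactly the bound $\ell(C_1)+\ell(C_2)$, and your bookkeeping ($2w_g$ of slack, $-w_g$ when $g$ is restored) is consistent. What your version buys is a more explicit verification of simplicity of the spliced cycle and of the degenerate situations (coincident edges, $Q$ collapsing to $e$, infinite distances), points the paper passes over quickly when it asserts that ``one of the paths $p_3$, $p_4$'' is a valid rerouting path; what the paper's version buys is the absence of your two-case analysis, since choosing the $c_2$-arc containing $(r,s)$ handles both situations at once.
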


\begin{proof}
(1) Positive definiteness:
As long as all edge weights are non-negative, all paths lengths and hence also
the rerouting distances are non-negative.

(2) Symmetry:
Suppose
\be
  \label{eq:repath-proof-metric-axioms}
  (v_0 = r, v_1, \ldots, v_i = m, v_{i+1} = n, \ldots v_k = s)
\ee
is the shortest rerouting path from $r$ to $s$ via $(m,n)$. Then
\be
  (v_{i+1} = n, v_{i+2}, \ldots, v_k = s, v_0 = r, v_1, \ldots, v_i = m)
\ee
is also a rerouting path from $n$ to $m$ via $(r,s)$. One can then show that this must be the shortest such rerouting path via contradiction. So suppose that
another path from $n$ to $m$ via $(r,s)$,
\be
  (u_{j+1} = n, u_{j+2}, \ldots u_\ell = s, u_0 = r, u_1, \ldots, u_j = m),
\ee
is shorter. Then the path
\be
  (v_0 = r, v_1, \ldots, v_i = m, v_{i+1} = n, \ldots v_\ell = s)
\ee
is a rerouting path from $r$ to $s$ via $(m,n)$ and it is shorter than than the one
defined in Equation~(\ref{eq:repath-proof-metric-axioms}).
This contradicts our initial assumption such that the path defined in
Equation~(\ref{eq:repath-proof-metric-axioms}) is the shortest rerouting path from
from $n$ to $m$ via $(r,s)$ and we obtain
\be
   {\rm edist}_{\rm re}^\textrm{w/u}[(r,s),(m,n)] = {\rm edist}_{\rm re}^\textrm{w/u}[(m,n),(r,s)].
\ee

(3) Triangle inequality: Let the paths
\begin{align*}
  p_1&=(v_0 = a, v_1, \ldots, v_i = m, v_{i+1} = n, \ldots v_\ell = b)\\
   p_2&=(u_0 = m, u_1, \ldots, u_j = r, u_{j+1} = s, \ldots u_\ell =n)
\end{align*}
be the shortest rerouting paths from $a$ to $b$ via edge $(m,n)$ and from $m$ to $n$ via $(r,s)$, respectively. Here, we assume the paths to be oriented as $v_i=m$, $v_{i+1}=n$ and $u_i = r, u_{i+1} = s$, but the proof is the same if the order of these vertices in the path is reversed. In addition to that, we assume the two distances on the right-hand-side of the inequality to be finite, otherwise the proof is trivial. We can extend the path $p_2$ to become a cycle by adding the edge $(n,m)$ to the end of the path
$$
  c_2=(u_0 = m, \ldots, u_i = r, u_{i+1} = s, \ldots u_\ell =n, u_{\ell+1}=m) .
$$
Now we can explicitly construct a rerouting path from $a$ to $b$ via $(r,s)$. Let $u_j\equiv v_p$ be the first vertex that appears in both $p_1$ and $c_2$ and let $u_k\equiv v_q$ the last such vertex. In this case, one of the following paths is a rerouting path from $a$ to $b$ via $(r,s)$
\begin{align*}
  p_3=(u_0 = a&, u_1, \ldots, u_j = v_p,\\
  &v_{p+1},\ldots ,v_{q-1},  u_{k} = v_q, \ldots u_\ell = b)\\
  \text{or } p_4=(u_0 = a&, u_1, \ldots, u_j = v_p,\\
  &v_{p-1},\ldots ,v_{q+1},  u_{k} = v_q, \ldots u_\ell = b).
\end{align*}
Assume without loss of generality that $p_3$ is a rerouting path from $a$ to $b$ via $(r,s)$. In this case, we obtain 
\begin{align*}
{\rm edist}&_{\rm re}^\textrm{w/u}[(a,b),(r,s)]\leq \text{length}((a,b))+\text{length}(p_3)\\
&\leq \text{length}((a,b))+\text{length}(p_1) + \text{length}(c_2)\\
&=\text{length}((a,b))+\text{length}(p_1)\\
&+\text{length}((m,n))+\text{length}(p_2)\\
&={\rm edist}_{\rm re}^\textrm{w/u}[(a,b),(m,n)]+{\rm edist}_{\rm re}^\textrm{w/u}[(m,n),(r,s)].
\end{align*}
Note that again the length of a path is the sum of the edge weights of all edges in the path when considering a weighted graph.
\end{proof}

\begin{lemma}
\label{lem-2disjoint}
The shortest rerouting path form $r$ to $s$ via edge $(m,n)$ is given by the
union of the edge $(m,n)$ and the two edge-\emph{independent} paths
$r \rightarrow m$ and $n \rightarrow s$ or
$r \rightarrow n$ and $m \rightarrow s$ which minimize the total path
length.
\end{lemma}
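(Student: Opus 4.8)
The plan is to prove the lemma by exhibiting a length‑preserving correspondence between rerouting paths via $(m,n)$ and pairs of edge‑independent paths of the two admissible ``pairings'', and then matching the minimisers on both sides.

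\emph{Decomposition (necessity).} First I would take an arbitrary rerouting path from $r$ to $s$ via $(m,n)$; by Definition~\ref{def-reroute-dist} it has the form $(r=v_0,\ldots,v_i=m,\,v_{i+1}=n,\ldots,v_k=s)$ (the reversed orientation being symmetric) and visits no vertex twice. Hence the prefix $P_1=(v_0,\ldots,v_i)$ and the suffix $P_2=(v_{i+1},\ldots,v_k)$ are internally vertex‑disjoint paths from $r$ to $m$ and from $n$ to $s$, neither of which uses the edge $(m,n)$, so in particular they are edge‑independent in the sense of Definition~\ref{def:path}. Since the total length splits as $\mathrm{length}(P_1)+w_{mn}+\mathrm{length}(P_2)$, the shortest rerouting path can be no shorter than $w_{mn}$ plus the minimum, over the two pairings, of the total length of an edge‑independent pair — and, because the two legs of any rerouting path are in fact \emph{vertex}‑disjoint, that minimum is attained by a vertex‑disjoint pair.

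\emph{Reassembly (sufficiency).} Conversely, given a minimum‑total‑length pair of edge‑independent paths — say $P_1:r\to m$ and $P_2:n\to s$ avoiding $(m,n)$ — I would form the walk $P_1\cdot(m,n)\cdot P_2$ and argue it may be taken simple. The clean route is the standard vertex‑splitting reduction: replace each vertex $v$ by an arc $v_{\mathrm{in}}\to v_{\mathrm{out}}$ of weight $0$, route the original (weighted) edges through $v_{\mathrm{in}}/v_{\mathrm{out}}$, adjoin a super‑source $\sigma$ linked to $r$ and $s$ and a super‑sink $\tau$ linked to $m$ and $n$ by weight‑$0$ arcs, and delete the image of $(r,s)$. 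In the split graph, two \emph{edge}‑disjoint $\sigma$–$\tau$ paths correspond exactly to a pair of \emph{vertex}‑disjoint paths of the form $\{r\to m,\,s\to n\}$ or $\{r\to n,\,s\to m\}$ in the original graph, and such a minimum pair never traverses $(m,n)$ (using both split‑arcs $m_{\mathrm{in}}\to m_{\mathrm{out}}$ and $n_{\mathrm{in}}\to n_{\mathrm{out}}$ would strand the second path, which can reach $\tau$ only through one of them). Running Suurballe's algorithm~\cite{Suur84} on the split graph therefore returns a minimum vertex‑disjoint pair; since $r,s,m,n$ are distinct along the two legs, adjoining $(m,n)$ yields a genuine simple rerouting path of length exactly $w_{mn}$ plus that minimum. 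Combining this with the bound from the decomposition step forces equality and shows the minimising edge‑independent pair may be taken vertex‑disjoint, which is the assertion of the lemma and simultaneously justifies the algorithmic claim made in the surrounding text.

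\emph{Main obstacle.} The delicate point is the reassembly: a minimum pair of merely \emph{edge}‑independent paths with terminal pattern $r$–$m$, $s$–$n$ can still share an interior vertex, so its naive union with $(m,n)$ is only a closed walk, and shortcutting the repeated vertex can delete the edge $(m,n)$. Resolving this is exactly where one needs the vertex‑splitting above or, equivalently, an uncrossing argument performed at the first shared vertex — and this is precisely why both pairings $\{r\to m,\,s\to n\}$ and $\{r\to n,\,s\to m\}$ must be admitted, since uncrossing converts one into the other. I would spell out this step in full; the remaining verifications (length bookkeeping, degenerate cases such as $r=m$, and the fact that deleting $(r,s)$ costs nothing because a rerouting path via $(m,n)\neq(r,s)$ cannot use the edge $(r,s)$) are routine.
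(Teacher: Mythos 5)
You have put your finger on exactly the right soft spot --- the paper's own proof simply concatenates the optimal edge-independent pair with $(m,n)$ and calls the result ``a valid rerouting path'', which is legitimate only if the two paths are vertex-disjoint --- but your proposed repair does not close that gap. Your decomposition step gives: shortest rerouting length $\ge w_{mn}+{}$(minimum over edge-disjoint pairs, both pairings); your reassembly via vertex splitting and Suurballe gives: shortest rerouting length $\le w_{mn}+{}$(minimum over \emph{vertex}-disjoint pairs). Since the edge-disjoint minimum is at most the vertex-disjoint minimum, these two bounds do not ``force equality''; they would do so only if the two minima coincide, and that is precisely the claim your uncrossing argument is supposed to deliver. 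But uncrossing at a shared vertex merely exchanges the pairing $\{r\to m,\,s\to n\}$ for $\{r\to n,\,s\to m\}$; it does not remove the shared vertex. Likewise, the parenthetical in your decomposition step (``because the two legs of any rerouting path are vertex-disjoint, that minimum is attained by a vertex-disjoint pair'') is a non sequitur: the optimal edge-independent pair need not arise as the legs of any rerouting path.

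The claimed equality is in fact false. Take unit weights on the vertices $r,s,m,n,x,a,b,c$ with edges $(r,s)$, $(m,n)$, $(r,x)$, $(x,m)$, $(x,n)$, $(x,s)$, $(n,a)$, $(a,b)$, $(b,c)$, $(c,s)$. The cheapest edge-independent pair is $r\!-\!x\!-\!m$ together with $n\!-\!x\!-\!s$ (total $4$); they share $x$, their union with $(m,n)$ is not a path, and uncrossing at $x$ yields $r\!-\!x\!-\!n$ and $m\!-\!x\!-\!s$, which again share $x$. The unique rerouting path is $r,x,m,n,a,b,c,s$, whose legs total $6$, so the edge-disjoint and vertex-disjoint minima genuinely differ and no argument can identify them. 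The robust statement --- and the one your split-graph construction actually proves --- is the vertex-disjoint version: the legs of a simple rerouting path share no vertex, and conversely a fully vertex-disjoint pair of either pairing concatenated with $(m,n)$ is simple, so both inequalities are immediate without any detour through edge-disjointness. In other words, keep your vertex-splitting/Suurballe computation as the algorithmic content, but replace ``edge-independent'' (as defined in Definition~\ref{def:path}) by vertex-disjoint in the statement you prove, rather than trying to show the two optimizations agree; note that the paper's proof, which works directly with the edge-disjoint pair, silently makes the same unjustified identification.
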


\begin{proof}
Assume that we have found a solution to the two-disjoint shortest path problem,
i.e. we have found two edge-independent paths
\begin{align}
   & (v_0 = r, v_1, \ldots, v_i = m) \\
   & (u_0 = s,  u_1 \ldots u_j = s),
   \label{eq:2-disjoint-proof}
\end{align}
which minimize the total path length.
By assumption the two paths are edge-independent such that
\be
   (v_0 = r, v_1, \ldots, v_i = m, u_0 = s,  u_1 \ldots u_j = s)
   \label{eq:re-from-two}
\ee
is a valid rerouting path.
Now it remains to show that this path is indeed the shortest possible. So assume
the contrary, i.e. that there exists a path
\be
  (w_0 = r, w_1, \ldots, w_i = m, w_{i+1} = n, \ldots w_k = s)
\ee
which is shorter than (\ref{eq:re-from-two}). But then the two paths
\begin{align}
   & (w_0 = r, w_1, \ldots, w_i = m) \\
   & (w_{i+1} = n,  w_{i+2} \ldots w_k = s)
\end{align}
are edge independent and have a shorter total path length than the two
paths \ref{eq:2-disjoint-proof}.
Contradiction.
\end{proof}

\bibliography{lodf}
\bibliographystyle{apsrev}

\end{document}